\newtheorem{definition}{Definition}
\newtheorem*{theorem}{Theorem}
\newtheorem{lemma}{Lemma}
\newtheorem{remark}{Remark}
\newtheorem{assumption}{Assumption}
\DeclareMathOperator{\sat}{sat}
\DeclareMathOperator{\VP}{VP}
\DeclareMathOperator{\DB}{DB}
\begin{document}

\title{High-Fidelity Large-Signal Order Reduction Approach for Composite Load Model}

\author{Zixiao~Ma,
	~Zhaoyu Wang,
	~Dongbo Zhao,
	and ~Bai Cui
	\thanks{
		Zixiao Ma and Zhaoyu Wang are with the Department of Electrical and Computer Engineering, Iowa State University, Ames, IA 50011, USA (email: zma@iastate.edu, wzy@iastate.edu).
		
		Dongbo Zhao is with Argonne National Laboratory, Argonne, IL 60439, USA (email: dongbo.zhao@anl.gov).
		
		Bai Cui is with National Renewable Energy Laboratory, Golden, CO 80401, USA (email: bcui@nrel.gov).
	}
}

\maketitle

\begin{abstract}
With the increasing penetration of electronic loads and distributed energy resources (DERs), conventional load models cannot capture their dynamics. Therefore, a new comprehensive composite load model is developed by Western Electricity Coordinating Council (WECC). However, this model is a complex high-order nonlinear system with multi-time-scale property, which poses challenges on stability analysis and computational burden in large-scale simulations. In order to reduce the computational burden while preserving the accuracy of the original model, this paper proposes a generic high-fidelity order reduction approach and then apply it to WECC composite load model. First, we develop a large-signal order reduction (LSOR) method using singular perturbation theory. In {this method}, the fast dynamics are integrated into the slow dynamics to preserve the transient characteristics of fast dynamics. Then, we propose the necessary conditions for accurate order reduction and embed them into the LSOR to improve and guarantee the accuracy of reduced-order model. Finally, we develop the reduced-order WECC composite load model using the proposed algorithm. Simulation results show the reduced-order large signal model significantly {alleviates} the computational burden while maintaining similar dynamic responses as the original composite load model.
\end{abstract}

\markboth{Submitted to IET for possible publication. Copyright may be transferred without notice}%
{Shell \MakeLowercase{\textit{et al.}}: Bare Demo of IEEEtran.cls for Journals}
\section{Introduction}\label{sec1}
\IEEEPARstart{P}{ower} system {load} modeling is important {in} stability analysis, optimization, and controller design [1]. Although this topic has been widely studied, it is still a challenging problem due to increasing diversity of load components and lack of detailed load information and measurements. 
	
Load models can be classified {into} static and dynamic {ones}. Static load models such as static constant impedance-current-power (ZIP) model and exponential model have simple model structures [2][3]. However, they cannot capture the dynamic load behaviors [4]-[10]. Motivated by the 1996 blackout of the Western Systems Coordinating Council (WSCC), a widely-used dynamic composite load model was developed [11]. The model consists of a ZIP and a dynamic induction motor (IM). It was designed to represent highly stressed {loading} conditions in summer peak {hours}. However, this interim load model was unable to {capture} the fault-induced delayed voltage recovery (FIDVR) events [7]. A preliminary WECC composite load model (WECC CLM) was proposed by adding an impedance representing the electrical distance between substation and end-users, an electronic load and a single-phase motor [12]-[14]. After a series of improvements, the latest WECC composite load model (CMPLDWG) is developed as shown in Fig. \ref{CMPLDWG}. The electrical distance between the substation and end-users is represented by a substation transformer, a shunt reactance, and a feeder equivalent. The model consists of three three-phase motors, one aggregate single-phase AC motor, one static load, one power electronics component, and {one} distributed energy resource (DER). The DER in CMPLDWG is currently represented by the PVD1 model [15]. However, PVD1 has 5 modules, 121 parameters, and 16 states, which is as complex as the CMPLDW itself. Therefore, the Electric Power Research Institute (EPRI) has developed a simpler yet more comprehensive model to replace PVD1, which is named as DER\_A model [15].
	
The above WECC CMPLDW + DER\_A model is a \textit{complex high-order nonlinear} dynamical system with \textit{multi-time-scale} property, which means the state vector is high-dimensional and the transient velocity of each state varies significantly. These characteristics result in two main challenges. Firstly, it increases the difficulty of dynamic stability analysis due to the numerous state variables.  Secondly, it makes simulation studies of a high-order power system computationally demanding or even infeasible. There are two main reasons for this high computational burden. One {reason is the shear dimensionality of the problem}. The other {comes from} the two-time-scale property of the model. This makes solving the model a stiff ordinary differential equation (ODE) problem, which requires small time steps to calculate the fast dynamics, and consequently results in long computational time to capture slow dynamics. The fast dynamics are often introduced by the intentionally added inductance and capacitance, moment of inertia, and parasitic elements inherent in the system [16]. However, simply neglecting the fast dynamics may lead to modeling inaccuracies in dynamic response and stability property. In order to accelerate computation while maintaining the accuracy and faithful stability property of the original load model, it is imperative to develop a high-fidelity reduced-order load model. To our best knowledge, this is the first paper on dynamic order reduction of WECC composite load model especially containing the DER\_A model.

The existing model reduction methods usually project the higher dimensional counterpart into a lower dimensional subspace where dynamic features of the original model dominate. Singular perturbation is the kind of method which considers the fast dynamics as boundary-layers and includes their solutions into slow dynamics. Singular perturbation method is suitable for analyzing two-time-scale problems and is widely used in power systems analysis. Previous applications include the derivation of reduced-order modeling of synchronous machines [17], microgrids [18], and distribution grid-tied systems with wind turbines [19]. However, these papers fall short of guaranteed accuracy and cannot be directly applied to the WECC composite load model due to the different system characteristics. 

Therefore, this paper develops a novel accuracy assessment theorem which takes into account the impact of external inputs on the accuracy of reduced system. By embedding the theorem, we proposes a high-fidelity order reduction approach for WECC composite load model. The derived high-fidelity reduced-order model can replace the original model in power system simulations for stability analysis and control applications with less computational complexity. Specifically, we improve the accuracy from two aspects. Firstly, without any simplification or linearization, we adopt the large-signal order reduction (LSOR) method based on singular perturbation theory to maintain all the dynamic characteristics of the original system. Secondly, we propose necessary conditions for accurate order reduction, and then integrate them into the LSOR method to theoretically guarantee the high accuracy of the reduced-order model. Note that this proposed approach is general and can be applied to various dynamic models.
	
The rest of the paper is organized as follows. Section \ref{C1} proposes high-fidelity order reduction approach in a general form. Section \ref{C2} introduces mathematical representation of WECC composite load model. Section \ref{C3} derives the reduced-order model using the proposed method. Section \ref{C4} shows the simulation results and analysis. Section \ref{C5} concludes the paper.
	\begin{figure}[t!]
		\centering
		\includegraphics[width=8.5cm]{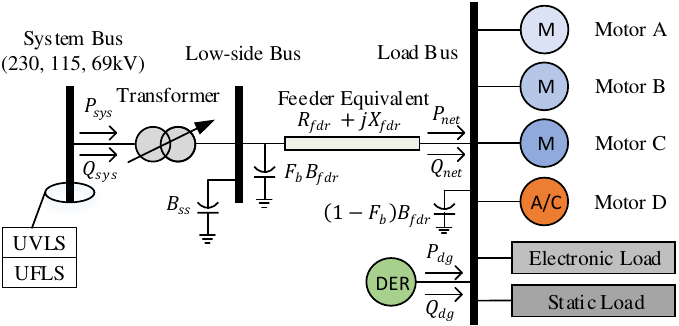}
		\caption{A schematic diagram of the WECC CMPLDWG [13].}
		\label{CMPLDWG}
	\end{figure}

\section{High-Fidelity Order Reduction Method}\label{C1}
Accurate load modeling is essential to power system stability analysis, optimization and control. To solve the challenges raised by high-order characteristics of WECC composite load model, we propose a general approach for high-fidelity order reduction in this section. We first introduce the LSOR method based on singular perturbation theory. A novel accuracy assessment theorem is then derived and embedded into the LSOR to guarantee the accuracy of reduced-order model.

\subsection{LSOR Based on Singular Perturbation Theory}
%Singular perturbation method was originally developed to analyze a problem with a small parameter that cannot be approximated by simply setting its value to zero. It avoids the discontinuity of solution by analyzing the problem in separate time scales \cite{Khalil2000}.
Consider a standard singular perturbation model as follows,
\begin{align}
    \dot {x} &= f\left( x,z,u,\varepsilon\right),   \label{standardx} \\
    \varepsilon\dot {z} &= g\left( x,z,u,\varepsilon\right), \label{standardz}
\end{align}
where $x\in \mathbb{R}^n$ represents slow state vector, $z\in \mathbb{R}^m$ denotes fast state vector, $u\in \mathbb{R}^p$ denote external input vector, and $\varepsilon\in[0,\varepsilon_0]$; $f$ and $g$ are Lipshitz continuous functions.
\begin{remark}
	Selecting the perturbation coefficient $\varepsilon$ for real physical systems is challenging. In most cases, we pick it based on our knowledge of the real system. In cases where it is unclear which parameter is small, we can locally linearize the system around the equilibrium point and use modal decomposition to identify the slow and fast dynamics. 
\end{remark}
When $\varepsilon$ is small, the fast transient velocity $\dot {z}=g/\varepsilon$ can be much larger than that of the slow transient $\dot{x}$. To solve this two-time-scale problem, we can set $\varepsilon=0$, then equation (\ref{standardz}) degenerates to the following algebraic equation,
\begin{eqnarray}\label{algebraic}
	0 = g\left( x,z,u,0\right).
\end{eqnarray}
Assuming that equation (\ref{algebraic}) has at least one isolated real root, and satisfies the implicit function theorem, then for each argument, we can obtain the quasi-steady-state (QSS) solution in a local vicinity around the isolated root,
\begin{eqnarray}\label{roots}
	z = h\left( x,u\right). 
\end{eqnarray}
Substituting equation (\ref{roots}) into equation (\ref{standardx}) and setting $\varepsilon=0$, we obtain the QSS model,
\begin{eqnarray}\label{quasi}
	\dot {x} = f\left( x,h\left( x,u\right),u,0\right).
\end{eqnarray}
We call the QSS system \eqref{quasi} the reduced-order model since its order drops from $n+m$ to $n$. The slow states can be obtained by solving the reduced-order model (\ref{quasi}), whereas the fast states are represented by equation (\ref{roots}). However, \eqref{roots} only gives approximate solution unless $\varepsilon$ is zero. To quantify the error between approximate and actual fast states, we denote the error as $y=z-h(x,u)$. Then in the fast-time-scale $\tau= t /\varepsilon$, the dynamics of $y$ are governed as follows,
\begin{align}
    \frac{\mathrm{d}y}{\mathrm{d}\tau}=&G(x,y,u,\varepsilon)\nonumber\\
    =&g(x,h(x,u)+y,u,\varepsilon)\nonumber\\
    &-\varepsilon\left[\frac{\partial h}{\partial x}f(x,h(x,u)+y,u,\varepsilon) + \frac{\partial h}{\partial u}\dot{u}\right].
\end{align}
Let $\varepsilon=0$, we obtain the boundary-layer model:
\begin{eqnarray}\label{blm}
	\frac{\mathrm{d}y}{\mathrm{d}\tau} = g\left( x,y+h\left( x,u\right),u,0\right). 
\end{eqnarray}

Note that the exact fast states are $z=y+h(x,u)$, but we do not know $(x,y)$. Therefore, if we can guarantee the accuracy of reduced-order model and boundary-layer model, then we can use their solutions $(\hat{x},\hat{y})$ instead of $(x,y)$. However, these models are exact only when $\varepsilon$ is exactly zero, which is obviously not the case for the studied system. Thus, we need to quantitatively assess the accuracy of reduced-order model when $\varepsilon$ is small yet nonzero. This motivates the next subsection.

\subsection{High-Fidelity LSOR with Accuracy Assessment}
Before deriving the performance guarantee of the proposed high-fidelity order reduction approach, we first introduce a few technical definitions and assumptions:

\begin{definition}
Class $\mathcal{K}$ function $\alpha :[0,t)\rightarrow [0,\infty )$ is a continuous strictly increasing function with $\alpha(0) = 0$.
\end{definition}
\begin{definition}
Class ${\mathcal {KL}}$ function $\beta :[0,t)\times [0,\infty )\rightarrow [0,\infty )$ is a continuous function satisfying: for each fixed $s$, the function   $\beta (r,s)$  belongs to class $\mathcal{K}$; for each fixed  $r$, the function $\beta (r,s)$ is decreasing with respect to $s$ and $\beta (r,s)\rightarrow 0$ for $s\rightarrow \infty$.
\end{definition}
\begin{definition}
$|f|=O(\varepsilon)$ is equivalent to $|f| \leqslant k \varepsilon$.
\end{definition}
\begin{assumption} \label{thm:as1}
	 The functions $f$, $g$, and their first partial derivatives are continuous and bounded with respect to $(x,z,u,\varepsilon)$; $h$ and its first partial derivatives $\partial h/\partial x$, $\partial h/\partial u$ are locally Lipschitz; and the Jacobian $\partial g/\partial z$ has bounded first partial derivatives with respect to its arguments.  
\end{assumption}
\begin{assumption} \label{thm:as2}
The reduced-order model (\ref{quasi}) is input-to-state stable with Lyapunov gain ${\alpha}$ as follows,
\begin{eqnarray}\label{ISS}
	\hat{x}\leqslant \beta(\|x(0)\|,t)+\alpha(\|u\|), 
\end{eqnarray}
where $\hat{x}$ is a solution of (\ref{quasi}), $\beta$ is a function of class $\mathcal{KL}$, $\alpha$ is a class $\mathcal{K}$ function, and $\|\cdot\|$ denotes any $p$-norm.
\end{assumption}

\begin{assumption} \label{thm:as3}
	The origin of the boundary-layer model (\ref{blm}) is a uniformly globally exponentially stable equilibrium and the solution $\hat{y}$ of (\ref{blm}) follows that
	\begin{eqnarray}\label{GES}
    \left\|\hat{y}(\tau)\right\|\leqslant k_1 e^{-a\tau},\;\forall\; \tau\geqslant0,
\end{eqnarray}
where $k_1$ and $a$ are positive constants.
\end{assumption}

Assumption \ref{thm:as1} describes the basic growth conditions on the original system which are commonly satisfied for power load models. Assumptions \ref{thm:as2} and \ref{thm:as3} are stability conditions on reduced-order model and boundary-layer model, respectively. 

Then, we propose the accuracy assessment index in the following accuracy assessment theorem, which will be embedded into the LSOR to realize high-fidelity order reduction. Before that, we give the following lemma for the proof of the theorem.

\begin{lemma}
Assume $\max\left\{ \left\|x(0) \right\|,\left\|y(0) \right\|,\left\| u \right\|,\left\| \dot{u} \right\|  \right\} \leqslant \mu$ holds for some positive constant $\mu$. Then there exists a class $\mathcal{KL}$ function $\beta_x$, a class $\mathcal{K}$ function $\alpha_x$ and positive constants $\mu_x$ and $\xi$ satisfying $\mu_x>\beta_x(\mu,0)+\alpha_x(\mu)+\xi$ such that $\|x(t)\|\leqslant \mu_x$ for all $t\in[0,\infty)$.
\end{lemma}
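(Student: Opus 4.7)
The plan is to treat the full slow subsystem as a perturbation of the reduced-order model \eqref{quasi} and lift the input-to-state stability (ISS) estimate of Assumption~\ref{thm:as2} to a uniform-in-time bound on $\|x(t)\|$. The key observation is that along the true trajectory $(x(t),y(t))$, the slow dynamics can be rewritten as
\begin{equation*}
\dot{x} = f\bigl(x,h(x,u),u,0\bigr) + \Delta f(x,y,u,\varepsilon),
\end{equation*}
where the first term is exactly the vector field of the reduced-order model and
\begin{equation*}
\Delta f = f\bigl(x,h(x,u)+y,u,\varepsilon\bigr) - f\bigl(x,h(x,u),u,0\bigr).
\end{equation*}
By Assumption~\ref{thm:as1} the partial derivatives of $f$ are bounded, so $\Delta f$ satisfies $\|\Delta f\|\leqslant L_f(\|y\|+\varepsilon)$ for some $L_f>0$, uniformly on the domain of interest.

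Next, I would use Assumption~\ref{thm:as3} to bound the perturbation. Since $\|y(t)\|\leqslant k_1 e^{-at/\varepsilon}\leqslant k_1$ uniformly for $t\geqslant 0$, and $\varepsilon\in[0,\varepsilon_0]$, the forcing term $L_f(\|y\|+\varepsilon)$ is bounded by a constant depending only on $k_1$ and $\varepsilon_0$, and its integral over $[0,\infty)$ is $O(\varepsilon)$. The hypothesis $\|\dot u\|\leqslant\mu$ is needed here so that the $(\partial h/\partial u)\dot u$ contribution absorbed into $\Delta f$ remains bounded and does not destroy the decay estimate of $y$.

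I would then invoke the Sontag--Wang converse ISS Lyapunov theorem applied to \eqref{ISS}, which yields a smooth function $V(x)$ satisfying $\underline{\alpha}(\|x\|)\leqslant V(x)\leqslant \overline{\alpha}(\|x\|)$ and, along the reduced dynamics, $\dot V\leqslant -\alpha_3(\|x\|)+\sigma(\|u\|)$ for class-$\mathcal{K}_\infty$ functions $\underline{\alpha},\overline{\alpha},\alpha_3,\sigma$. Evaluated along the full slow dynamics,
\begin{equation*}
\dot V \leqslant -\alpha_3(\|x\|) + \sigma(\|u\|) + c_V\bigl(\|y(t)\|+\varepsilon\bigr),
\end{equation*}
where $c_V$ bounds $\|\partial V/\partial x\|\,L_f$ on the relevant sublevel set. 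Since $\|u\|\leqslant \mu$ and $\|y(t)\|\leqslant k_1$, the right-hand side is strictly negative whenever $\|x\|$ exceeds $\alpha_3^{-1}\bigl(\sigma(\mu)+c_V(k_1+\varepsilon_0)\bigr)$, producing a forward-invariant sublevel set of $V$. The uniform bound $\mu_x$ is then assembled as $\beta_x(\mu,0)$ (transient term from $V(x(0))$), $\alpha_x(\mu)$ (ISS gain on $u$), and $\xi$ (excess due to $y$ and $\varepsilon$ that is $O(k_1+\varepsilon_0)$).

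The main obstacle will be justifying that the sublevel set on which $\dot V<0$ contains the initial state and remains controlled over the entire infinite horizon while $y$ is still in its transient phase; this requires combining the exponential decay estimate of Assumption~\ref{thm:as3} with the dissipation margin of $V$ on a slightly enlarged region, and it is the step where the bounded Jacobians of Assumption~\ref{thm:as1} play the decisive role. A secondary technical point is the legitimacy of invoking the converse ISS Lyapunov theorem, which needs the reduced vector field to be locally Lipschitz uniformly in $u$; this is again supplied by Assumption~\ref{thm:as1} together with the local Lipschitz continuity of $h$.
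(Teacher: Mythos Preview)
Your approach is genuinely different from the paper's and, as written, contains a circularity that prevents it from closing. The paper does \emph{not} build a converse ISS Lyapunov function for the slow dynamics. Its argument is a bare truncation/continuation step: pick $\mu_x>\beta_x(\mu,0)+\alpha_x(\mu)+\xi$, let $[0,t_{\max})$ be the maximal interval on which $\|x(t)\|\leqslant\mu_x$, and show that if $t_{\max}<\infty$ then on this interval the ISS-type bound forces $\|x(t)\|<\mu_x$ strictly, so the interval can be extended---a contradiction. The whole point of the truncation is that \emph{on that interval} $x$ sits in a compact set, which is what later lets the paper invoke the boundary-layer Lyapunov function (Khalil, Lemma~9.8) to control $y$.

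The gap in your plan is the step ``Since $\|y(t)\|\leqslant k_1 e^{-at/\varepsilon}\leqslant k_1$ uniformly for $t\geqslant 0$.'' Assumption~\ref{thm:as3} gives that decay for $\hat{y}$, the solution of the \emph{boundary-layer model}~\eqref{blm} with $(x,u)$ frozen, not for the true fast error $y=z-h(x,u)$. The true $y$ obeys $\tfrac{dy}{d\tau}=G(x,y,u,\varepsilon)$, which carries the extra $\varepsilon[\partial h/\partial x\,f+\partial h/\partial u\,\dot u]$ term; bounding that term requires $x$ to lie in a compact set, which is precisely the conclusion you are trying to prove. So your Lyapunov inequality $\dot V\leqslant -\alpha_3(\|x\|)+\sigma(\|u\|)+c_V(\|y\|+\varepsilon)$ is only useful once you already know $\|y\|$ is bounded, and you cannot get that from Assumption~\ref{thm:as3} alone. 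To repair your route you would need to run a simultaneous (composite) Lyapunov argument for $(x,y)$, or else wrap your Lyapunov estimate inside exactly the maximal-interval bootstrap that the paper uses---at which point the Sontag--Wang converse construction becomes unnecessary overhead for this lemma.
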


\begin{proof}\label{proof1}
According to the definition of class $\mathcal{KL}$ functions, we have $\beta_x(\mu,0)\leqslant\mu, \forall\; \mu\in\mathbb{R},$ so $\mu_x>\mu$. Since $x$ is continuous with finite initial conditions, we can find a maximal interval $[0,t_{\max})$, in which $\|x(t)\|\leqslant\mu_x$, where $t_{\max}>0$ is defined as the upper bound of the interval. From the definition of $\mu$ and the assumption that $t_{\max}$ is finite, there must be some positive constant $\Delta t$ such that $\|x(t)\|\leqslant\mu_x$ holds for all $t\in[0,t_{\max}+\Delta t)$. This contradicts that $t_{\max}$ is the upper bound, so $t_{\max}$ should be infinite and $\|x(t)\|\leqslant\mu_x, \forall \;t\in[0,\infty)$.
\end{proof}

\begin{theorem}
If Assumption \ref{thm:as1}--\ref{thm:as3} are satisfied, then there exist positive constants $\varepsilon^*$, $\mu$, such that for all  $t\in [0,\infty)$, $\max\left\lbrace\left\|x(0) \right\|,\left\|y(0) \right\|,\left\| u \right\|,\left\| \dot{u} \right\|  \right\rbrace \leqslant \mu$, and $\varepsilon \in (0,\varepsilon^*]$, the errors between solutions of original system (\ref{standardx})-(\ref{standardz}) and its reduced-order model (\ref{quasi}) and boundary-layer model (\ref{blm}) satisfy
\begin{align}
x(t,\varepsilon)-\hat{x}(t) =O( \varepsilon) , \label{slow} \\
z(t,\varepsilon)-h(\hat{x}(t),u(t))-\hat{y}(t/\varepsilon) =O( \varepsilon), \label{fast}
\end{align}
where $\hat{x}(t)$ and $\hat{y}(\tau)$ are the solutions of reduced-order model (\ref{quasi}) and boundary-layer model (\ref{blm}), respectively. Furthermore, for any given $T>0$, there exists a positive constant $\varepsilon^{**}\leqslant\varepsilon^*$ such that for $t\in\left[ T,\infty\right) $ and $\varepsilon<\varepsilon^{**}$, it follows that
\begin{eqnarray}\label{morefast}
z(t,\varepsilon)-h(\hat{x}(t),u(t))=O( \varepsilon).
\end{eqnarray}
\end{theorem}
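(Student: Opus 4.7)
The plan is to adapt the classical infinite-horizon Tikhonov argument by rewriting the original system in terms of the boundary-layer error $y=z-h(x,u)$, so that the fast equilibrium sits at the origin, and then bounding the deviations $e_x(t)=x(t,\varepsilon)-\hat{x}(t)$ and $e_y(\tau)=y(\varepsilon\tau)-\hat{y}(\tau)$ separately, exploiting Assumption \ref{thm:as2} for the slow error and Assumption \ref{thm:as3} for the fast error. The key observation is that the $O(\varepsilon)$ perturbation appearing in the fast equation (the extra term $\varepsilon\bigl[(\partial h/\partial x)f+(\partial h/\partial u)\dot{u}\bigr]$), together with the coupling of the slow equation through $y$, can both be absorbed by Lyapunov functions furnished by the two stability assumptions.

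First I would invoke Lemma 1 together with a converse-Lyapunov argument to confine the full trajectory to a compact set on which $f$, $g$, $h$, and their first derivatives are bounded and Lipschitz, as guaranteed by Assumption \ref{thm:as1}. Assumption \ref{thm:as2} delivers a uniform-in-$t$ bound $\mu_x$ on $\|\hat{x}(t)\|$; Assumption \ref{thm:as3} (uniform global exponential stability of the boundary-layer origin) yields, via Massera's converse-Lyapunov theorem, a $C^{1}$ function $V(x,u,y)$ satisfying $c_{1}\|y\|^{2}\leqslant V\leqslant c_{2}\|y\|^{2}$, $(\partial V/\partial y)\,g\leqslant -c_{3}\|y\|^{2}$, and $\|\partial V/\partial y\|\leqslant c_{4}\|y\|$. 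Differentiating $V$ along the actual (perturbed) fast equation gives $\dot V\leqslant -(c_{3}-\varepsilon k_{2})\|y\|^{2}+\varepsilon k_{3}\|y\|$, and a comparison-lemma argument then produces $\|y(t)\|\leqslant k_{1}e^{-a t/\varepsilon}\|y(0)\|+O(\varepsilon)$; in particular $\int_{0}^{t}\|y(s)\|\,ds=O(\varepsilon)$ uniformly in $t$.

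Next, for the slow error I would compute $\dot{e}_x=f(x,h(x,u)+y,u,\varepsilon)-f(\hat{x},h(\hat{x},u),u,0)$ and, using Lipschitz continuity of $f$ and $h$, obtain $\|\dot{e}_x\|\leqslant L_{1}\|e_x\|+L_{2}\|y\|+L_{3}\varepsilon$. The ISS property of the reduced model (Assumption \ref{thm:as2}) supplies an ISS-Lyapunov function whose dissipation, combined with the integral bound on $\|y\|$ above, yields through a Gronwall/comparison inequality $\|e_x(t)\|=O(\varepsilon)$ uniformly in $t\in[0,\infty)$, which is (\ref{slow}). Writing $z-h(\hat{x},u)-\hat{y}(t/\varepsilon)=[h(x,u)-h(\hat{x},u)]+[y(t)-\hat{y}(t/\varepsilon)]$ and applying Lipschitz continuity of $h$ together with the fast-error bound then gives (\ref{fast}). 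Relation (\ref{morefast}) is immediate: for $t\geqslant T$, $\|\hat{y}(t/\varepsilon)\|\leqslant k_{1}e^{-aT/\varepsilon}$, and choosing $\varepsilon^{**}$ small enough to force this exponential term to itself be $O(\varepsilon)$ lets the $\hat{y}$ contribution be absorbed into the error.

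The main obstacle I expect is making the Gronwall-type estimate for $e_x$ \emph{uniform in} $t$ on $[0,\infty)$ rather than only on compact subintervals. A naive Gronwall would give an exponentially growing factor $e^{L_{1}t}$ that destroys the $O(\varepsilon)$ bound as $t\to\infty$. This is exactly why ISS of the reduced model is indispensable: it provides an ISS-Lyapunov function whose dissipation rate beats the Lipschitz growth, so that a composite Lyapunov function of the form $W(e_x)+\lambda V(x,u,y)$ for a suitably chosen $\lambda>0$ is strictly dissipative under small perturbations. Constructing this composite function and verifying that the cross terms do not spoil the dissipativity on the full infinite horizon is the technical heart of the argument, and the place where the three assumptions must all be used together.
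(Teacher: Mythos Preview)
Your overall strategy---rewriting in terms of $y=z-h(x,u)$, invoking a converse-Lyapunov function for the boundary-layer model, and using comparison-lemma arguments to propagate $O(\varepsilon)$ bounds---matches the paper's approach closely. The paper likewise uses Lemma~1 to confine $x$ to a compact set, then applies a converse-Lyapunov result (Lemma~9.8 of Khalil~[16]) to obtain a quadratic Lyapunov function $V_y(x,y,u)$ with exactly the bounds you list, and finishes with $W=\sqrt{V}$ and the comparison lemma.

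There is, however, a genuine gap in your treatment of the fast error. You bound $y$ itself, obtaining $\|y(t)\|\leqslant k_{1}e^{-at/\varepsilon}\|y(0)\|+O(\varepsilon)$, and then invoke an unspecified ``fast-error bound'' to conclude $y-\hat{y}=O(\varepsilon)$ in your decomposition $z-h(\hat{x},u)-\hat{y}=[h(x,u)-h(\hat{x},u)]+[y-\hat{y}]$. But the bound on $y$ alone does not deliver this: during the initial boundary layer both $y(t)$ and $\hat{y}(t/\varepsilon)$ are $O(1)$, and separate exponential bounds on each give no control on their difference. The paper closes this gap by working directly with $\sigma_y=y-\hat{y}$: it writes $d\sigma_y/d\tau=G(x,\sigma_y,u,0)+\Delta G$, where $\Delta G$ collects the perturbation terms, shows $\|\Delta G\|\leqslant k_2\|\sigma_y\|^2+k_1k_3\|\sigma_y\|e^{-a\tau}+\varepsilon k_5$ using Lipschitz continuity of $G$ and of $t\mapsto x(t)$, and then applies the \emph{same} converse-Lyapunov function $V_y$---evaluated at $(x,\sigma_y,u)$ rather than $(x,y,u)$---to obtain $\sigma_y(\tau)=O(\varepsilon)$ via the comparison lemma. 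Your bound on $y$ is still useful (it feeds the slow-error analysis through $\int_0^t\|y(s)\|\,ds=O(\varepsilon)$), but a separate Lyapunov argument for $\sigma_y$ is indispensable for establishing~(\ref{fast}); the composite function $W(e_x)+\lambda V(x,u,y)$ you propose in the last paragraph still involves $y$, not $y-\hat{y}$, and therefore does not address this point either.
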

\begin{remark}
 Equation (\ref{fast}) means that when Assumption 1 and 3 are satisfied, we can use $h+\hat{y}$ to accurately represent the solution of fast dynamics for $\varepsilon\in[0,\varepsilon^{*}]$ and bounded inputs. However, it requires solving the boundary-layer model. Further, (\ref{morefast}) means that if $\varepsilon\leqslant\varepsilon^{**}<\varepsilon^*$, the solution of fast transient can be estimated by only $h(t,\hat{x}(t))$ after $T>0$. {This result significantly simplifies the order reduction.}
\end{remark}
\begin{proof}
From Assumption 2, we know that the solution of reduced-order model is bounded for bounded inputs. Therefore, we can expect that $x$ is also bounded if $\|x-\hat{x}\|=O(\varepsilon)$. However, we cannot use this inequality since it has not been proven yet. Therefore, we exploit signal truncation as Lemma 1 to show that $x$ is in a compact set. Then by Assumption 1, we have that the argument of $f(x,z,u,\varepsilon)$ is compact. Since $f$ is continuous, it follows that $f$ is bounded, i.e., $|f|\leqslant k_0$, and $x(t)$ is Lipshitz.

Then using Assumption 1, 3 and Lemma 9.8 in [16], we conclude that there exists a Lyapunov function $V_y(x,y,u)$ and positive constants $b_1,b_2,\dots,b_6$ and $\rho_0$ satisfying
\begin{align}
    b_1\|y\|^2\leqslant V_y(x,y,u)\leqslant& \;b_2\|y\|^2,\label{981}\\
    \frac{\partial V_y}{\partial y}G(x,y,u,0)\leqslant-&b_3\|y\|^2,\label{982}\\
    \left\|\frac{\partial V_y}{\partial y}\right\|\!\leqslant\! b_4 \left\|y\right\|;\left\|\frac{\partial V_y}{\partial x}\right\|\!\leqslant\! b_5\left\|y\right\|^2;&\left\|\frac{\partial V_y}{\partial u}\right\|\!\leqslant\! b_6\left\|y\right\|^2,\label{983}
\end{align}
for all $y\in\{\|y\|<\rho_0\}$ and all $(x,u)\in\mathbb{R}^n\times \mathbb{R}^p$.

To assess the accuracy of solutions of fast dynamics, we define the estimation error as 
\begin{eqnarray}\label{error}
    \sigma_y(\tau,\varepsilon)=y(\tau,\varepsilon)-\hat{y}(\tau).
\end{eqnarray}
Differentiate both sides of equation (\ref{error}) and abbreviate $x(t_0+\varepsilon\tau,\varepsilon),y(\tau,\varepsilon),u(t_0+\varepsilon\tau,\varepsilon),\sigma_y(\tau,\varepsilon)$ as $x,y,u,\sigma_y$, respectively, then we have
\begin{align}\label{G}
    \frac{\partial\sigma_y}{\partial \tau}&=G(x,y,u,\varepsilon)-G(x_0,\hat{y},u_0,0)\nonumber\\
    &=G(x,\sigma_y,u,0)+\Delta G,
\end{align}
where $\Delta G=G(x,y,u,\varepsilon)-G(x,\sigma_y,u,0)-G(x_0,\hat{y},u_0,0)$. Utilizing the Lipshitz conditions of $G$ and $x$, and the condition of Lemma 1, we have
\begin{align}\label{deltag}
    &\|\Delta G\|\!\leqslant \!k_2\|\sigma_y\|^2\!+\!\varepsilon l_1\!+\!(k_3\|\sigma_y\|\!+\!l_2|u\!-\!u_0|\!+\!l_3\|x\!-\!x_0\|\!)\|\hat{y}\|\nonumber\\
    &\leqslant \!k_2\|\sigma_y\|^2\!+\!k_1\!k_3\|\sigma_y\|e^{-a\tau}\!+\!\varepsilon l_1\!+\!\varepsilon(l_2\mu \tau\!+\!l_3k_4\!+\!l_3k_0\tau)e^{-a\tau}\nonumber\\
    &\leqslant \!k_2\|\sigma_y\|^2\!+\!k_1\!k_3\|\sigma_y\|e^{-a\tau}\!+\!\varepsilon k_5,
\end{align}
for some nonnegative constants $a,k_i, i=1,\dots,5$ and nonnegative Lipshitz constants $l_j, j=1,\dots,3$, where $k_5=l_1+k_1 \max\{l_3k_4,l_2+l_3k_0\}\times \max\{1,1/a\}$.

Equation (\ref{G}) can be viewed as the perturbation of 
\begin{align}
    \frac{\partial\sigma_y}{\partial \tau}=G(x,\sigma_y,u,0).
\end{align}
Using (\ref{981})-(\ref{983}) and (\ref{deltag}), the derivative of Lyapunov funtion $V_y(x,\sigma,u)$ along the trajectories of (\ref{G}) can be calculated as
\begin{align}
    \dot{V}_y=&\frac{\partial V_y}{\partial x}f+\frac{1}{\varepsilon}\cdot \frac{\partial V_y}{\partial \sigma_y}(G+\Delta G)+\frac{\partial V_y}{\partial u}\dot{u}\nonumber\\
    \leqslant& b_5k_0\|\sigma_y\|^2-\frac{b_3}{\varepsilon}\|\sigma_y\|^2+b_6\mu\|\sigma_y\|^2\nonumber\\
    &+\frac{b_4}{\varepsilon}\|\sigma_y\|(k_2\|\sigma_y\|^2+k_1k_3\|\sigma_y\|e^{-a\tau}+\varepsilon k_5)\nonumber\\
    \leqslant&-\frac{b_3}{2\varepsilon}\|\sigma_y\|^2+\frac{b_4k_1k_3}{\varepsilon}e^{-a\tau}\|\sigma_y\|^2+b_4k_5\|\sigma_y\|\nonumber\\
    \leqslant&-\frac{2}{\varepsilon}(\xi_1-\xi_2e^{-a\tau})V_y+2\xi_3\sqrt{V_y},
\end{align}
for $0<\varepsilon\leqslant\varepsilon^*$ and $\|\sigma_y\|\leqslant b_3/(4b_4k_2)$, where $\xi_1=b_3/(4b_2)$, $\xi_2\!=\!b_4k_1k_3/\!(2b_1)$, $\xi_3\!=\!b_4k_5/(2\sqrt{b_1})$, and $\varepsilon^*\!=\!b_3/(b_5k_0\!+\!b_6\mu)$.

Let $W_y=\sqrt{V_y}$ and use the comparison lemma, we have
\begin{align}
    W_y(\tau)&\leqslant \phi(\tau,0)W_y(0)+\varepsilon \xi_3 \int_{0}^{\tau}\phi(\tau,s)\;ds,\\
    |\phi(\tau,s)|&=\left|e^{-\int_{s}^{\tau}(\xi_1-\xi_2e^{-a\upsilon})\;d\upsilon}\right|\leqslant \xi_4e^{-\bar{a}(\tau-s)},\label{phitaus}
\end{align}
for some positive constants $\xi_4$ and $\bar{a}$. Since $\sigma_y(0)=O(\varepsilon)$, it follows that $\sigma_y(\tau)=O(\varepsilon)$ for all $\tau \geqslant0$. Then we can conclude that (\ref{fast}) holds $\forall\varepsilon\leqslant\varepsilon^*$ and $\forall\; t\geqslant0$. 

Moreover, from (\ref{GES}), we have $e^{-at/\varepsilon}\leqslant\varepsilon$, $\forall\; at\geqslant\varepsilon \ln(1/\varepsilon)$, then the term $\hat{y}(t/\varepsilon)$ will be $O(\varepsilon)$ on $[T,\infty)$ for $\varepsilon\in[0,\varepsilon^{**}]$, where $(\varepsilon^{**},T)$ is a pair of solution of 
\begin{eqnarray}\label{epsilon**}
    \varepsilon \ln \left(\frac{1}{\varepsilon}\right)=aT.
\end{eqnarray}

Now we have proved the accuracy of the solutions of fast dynamics. To show the conditions for \textit{accurate} solutions of slow dynamics, we can define $\sigma_x(t,\varepsilon)=x(t,\varepsilon)-\hat{x}(t)$. Following the similar procedure as (\ref{981})-(\ref{phitaus}), it can be verified that if Assumption 1-3 are satisfied, then (\ref{slow}) holds for $\varepsilon\in[0,\varepsilon^*]$ and all $t\in[0,\infty)$.
\end{proof}
\begin{remark}
    Note that $\varepsilon^*$ is a function of the bound of input signals and it follows that
    \begin{align}
        \lim_{\mu\to 0}{\varepsilon^*}=\frac{b_3}{b_5k_0}\; and\; \lim_{\mu\to +\infty}{\varepsilon^*}=0.
    \end{align}
     This means when inputs are zero, the upper bound of $\varepsilon$ is equal to that of its autonomous system; while when the inputs are unbounded, $\varepsilon$ must be exactly zero to guarantee the accuracy of the reduced-order model. This result reflects the impact of external inputs on the accuracy of the reduced-order model.
\end{remark}

The overall algorithm of this proposed high-fidelity order reduction method can be concluded as Algorithm \ref{alg:HFOR}.
\begin{algorithm}
\caption{High-Fidelity Order Reduction}\label{alg:HFOR}
\begin{algorithmic}[1]
\State {Find the perturbation coefficients $\varepsilon$. Identify the states with $\varepsilon$ as fast states, while the others as slow states.}
\Procedure{Reduced model derivation}{}
    \State {Let $\varepsilon=0$, solve the algebraic equation (\ref{algebraic}) to obtain the isolated QSS solutions $z = h\left({x},u\right)$.}
    \State {Substitute $z$ into (\ref{standardx}) to obtain reduced-order model (\ref{quasi})}
    \State {Derive the boundary-layer model using equation (\ref{blm}).}
    \EndProcedure
\Procedure{Calculate the Bound of $\varepsilon$}{}
    \State {Calculate $\varepsilon^*=b_3/(b_5k_0\!+\!b_6\mu)$.}
    \State {Calculate $\varepsilon^{**}$ by solving equation (\ref{epsilon**}).}
\EndProcedure
\Procedure{Accuracy Assessment}{}
    \If {\texttt{$ \varepsilon\leqslant\varepsilon^{*}$ }}
        \If{$\varepsilon\leqslant\varepsilon^{**}$}
        \State {$z=h(\hat{x},u)$ is the solution of fast dynamics}
        \Else   \State Use $z=h(\hat{x},u)+\hat{y}$ by solving (\ref{blm}).
        \EndIf
    \Else  \State{Return to Step 1 to re-identify slow/fast dynamics}
    \EndIf
\EndProcedure
\end{algorithmic}
\end{algorithm}

\section{Mathematical Representation of WECC Composite Load Model}\label{C2}
To apply the singular perturbation theory, we need the mathematical representation of WECC composite load model, which can be found in our previous work [20]. Since our objective is to reduce the order of \textit{dynamic} parts, the static ones such as single-phase motor (which is modeled as a performance model [14]), electronic loads [2] and static load are out of the scope of this paper. For brevity, only the mathematical representation of dynamic components are introduced in this section.

\subsection{Three-Phase Motor Model}
WECC composite load model uses three three-phase fifth-order induction motors, called motor A, B and C, to represent different types of dynamic components. These three-phase motors have the same structure but different parameter settings. The block diagram of the induction motor model is shown in Fig. \ref{threephase}. There are four dynamic equations with respect to ${E}_{\rm q}'$, ${E}_{\rm d}'$, ${E}_{\rm q}{''}$ and ${E}_{\rm d}{''}$. By adding the dynamic equation governing the slip $s$, we can represent the complete fifth-order model as follows,
\begin{figure}[t!]
	\centering
	\includegraphics[width=8cm]{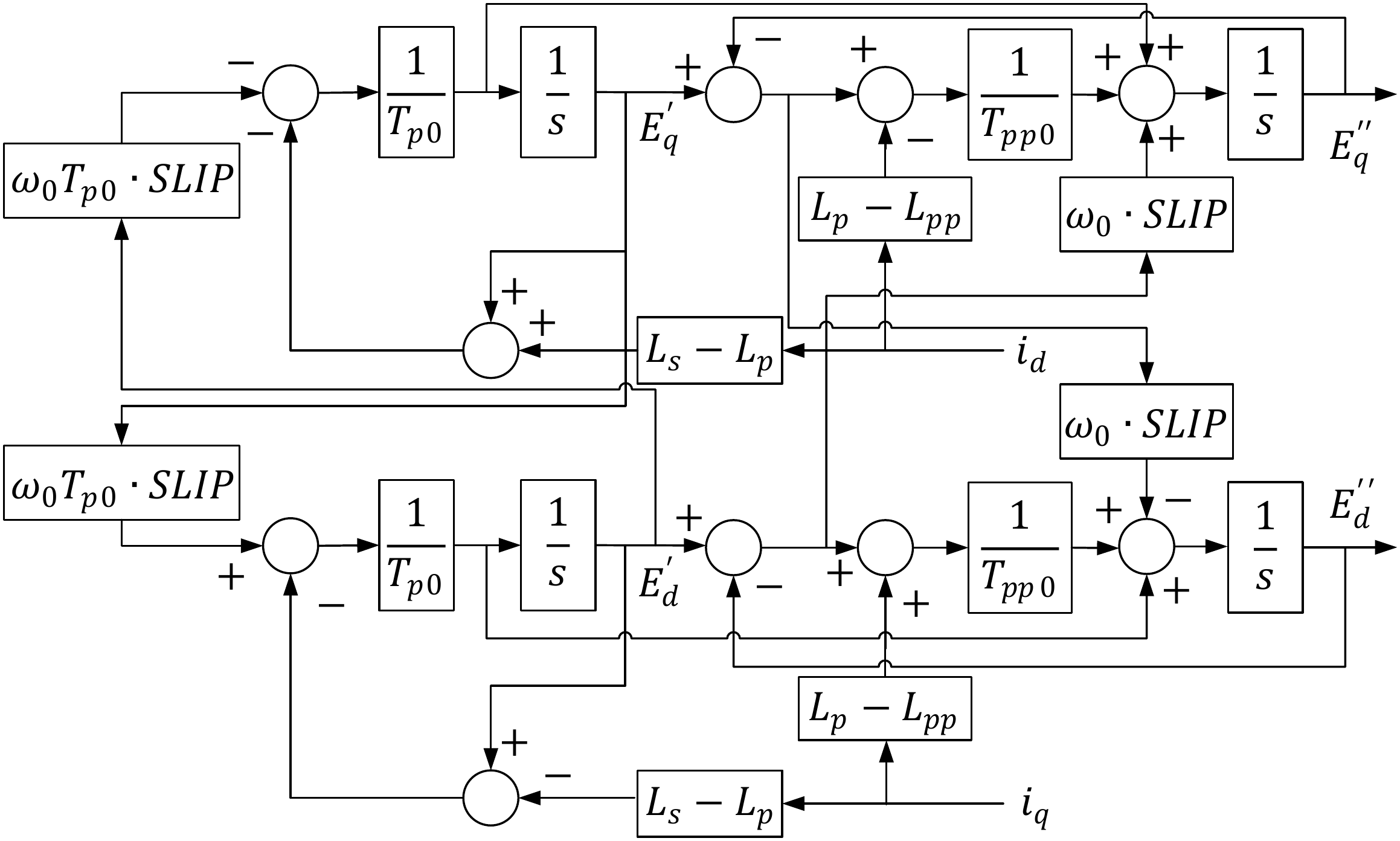}
	\caption{The block diagram of three-phase motor adopted in the WECC composite load model [14].}
	%\vskip-5pt
	\label{threephase}
\end{figure}
\begin{align}
&\dot {E}_{\rm q}'\! = \!\frac{1}{{{T_{\rm p0}}}}\left[ { - E_{\rm q}'\! -\! {i_{\rm d}}\left( {{L_{\rm s}} \!- \!{L_{\rm p}}} \right)\! - \!E_{\rm d}'\! \cdot \!{\omega _0} \!\cdot\! s \!\cdot\! {T_{\rm P0}}} \right],\\\label{Eq'}
&\dot {E}_{\rm d}' \!=\! \frac{1}{{{T_{\rm p0}}}}\left[ {-E_{\rm d}'\! + \!{i_{\rm q}}\left( {{L_{\rm s}}\! -\! {L_{\rm p}}} \right) \!+ \!E_{\rm q}' \!\cdot\! {\omega _0} \!\cdot\! s\! \cdot\! {T_{\rm P0}}} \right],\\\label{Ed'}
&\dot E_{\rm q}{''} \!\!={\frac{{T_{\rm p0}\!-\!T_{\rm pp0}}}{T_{\rm p0}T_{\rm pp0}}}E_{\rm q}'\!-\!{\frac{T_{\rm pp0}\left({ L_{\rm s}\!\!-\!\!L_{\rm p}}\right) \!+\!T_{\rm p0}\left( {L_{\rm p}\!\!-\!\!L_{\rm pp}}\right) }{{T_{\rm p0}T_{\rm pp0}}}}i_{\rm d}\!\!\\\nonumber
&	-{\frac{1}{T_{\rm pp0}}}E_{\rm q}''-\omega_0\cdot s \cdot E_{\rm d}'',\\\label{Eq''}
&\dot E_{\rm q}{''} \!\!={\frac{{T_{\rm p0}\!-\!T_{\rm pp0}}}{T_{\rm p0}T_{\rm pp0}}}E_{\rm q}'\!-\!{\frac{T_{\rm pp0}\left({ L_{\rm s}\!\!-\!\!L_{\rm p}}\right) \!+\!T_{\rm p0}\left( {L_{\rm p}\!\!-\!\!L_{\rm pp}}\right) }{{T_{\rm p0}T_{\rm pp0}}}}i_{\rm d}\!\!\\\nonumber
&	-{\frac{1}{T_{\rm pp0}}}E_{\rm q}''-\omega_0\cdot s \cdot E_{\rm d}'',\\\label{Ed''}
&\dot s =  - \frac{{p \cdot E_{\rm d}{''} \cdot {i_{\rm d}} + q \cdot E_{\rm q}{''} \cdot {i_{\rm q}} -T_{\rm L}}}{{2H}}.
\end{align}
The algebraic equations are:
\begin{align}
	T_{\rm L} &= {T_{\rm m0}}\left( {A{w^2} + Bw + {C_0} + D{w^{\rm Etrq}}} \right),\\
	{T_{\rm m0}}& = pE_{\rm d0}{''}{i_{\rm d0}} + qE_{\rm q0}{''}{i_{\rm q0}},\\
	w &= 1 - s,\\
	i_{\rm d}&={\frac{r_{\rm s}}{r_{\rm s}^2+L_{\rm pp}^2}}({V_{\rm d}+E_{\rm d}{''}} )+ {\frac{L_{\rm pp}}{r_{\rm s}^2+L_{\rm pp}^2}}({V_{\rm q}+E_{\rm q}{''}} ), \\
	i_{\rm q}&={\frac{r_{\rm s}}{r_{\rm s}^2+L_{\rm pp}^2}}({V_{\rm q}+E_{\rm q}{''}} )- {\frac{L_{\rm pp}}{r_{\rm s}^2+L_{\rm pp}^2}}({V_{\rm d}+E_{\rm d}{''}}), \\
	P&=V_{\rm d}i_{\rm d}+V_{\rm q}i_{\rm q}, \\
	Q&=V_{\rm q}i_{\rm d}-V_{\rm d}i_{\rm q}, 
\end{align}
where ${E}_{\rm q}'$, ${E}_{\rm d}'$, ${E}_{\rm q}{''}$, ${E}_{\rm d}{''}$ and $s$ are the five state variables; $L_{\rm s}$, $L_{\rm p}$ and $L_{\rm pp}$ are synchronous reactance, transient and subtransient reactance, respectively; $T_{\rm p0}$ and $T_{\rm pp0}$ are transient and subtransient rotor time constants, respectively; and $\omega_0$ is the synchronous frequency.
\subsection{DER\_A Model}
Recently, EPRI developed a new model to represent aggregated renewable energy resources named DER\_A which has fewer states and parameters than the previous PVD1 model. The dynamic model of DER\_A is as follows,
\begin{figure*}[ht!]
	\centering
	\includegraphics[width=18cm]{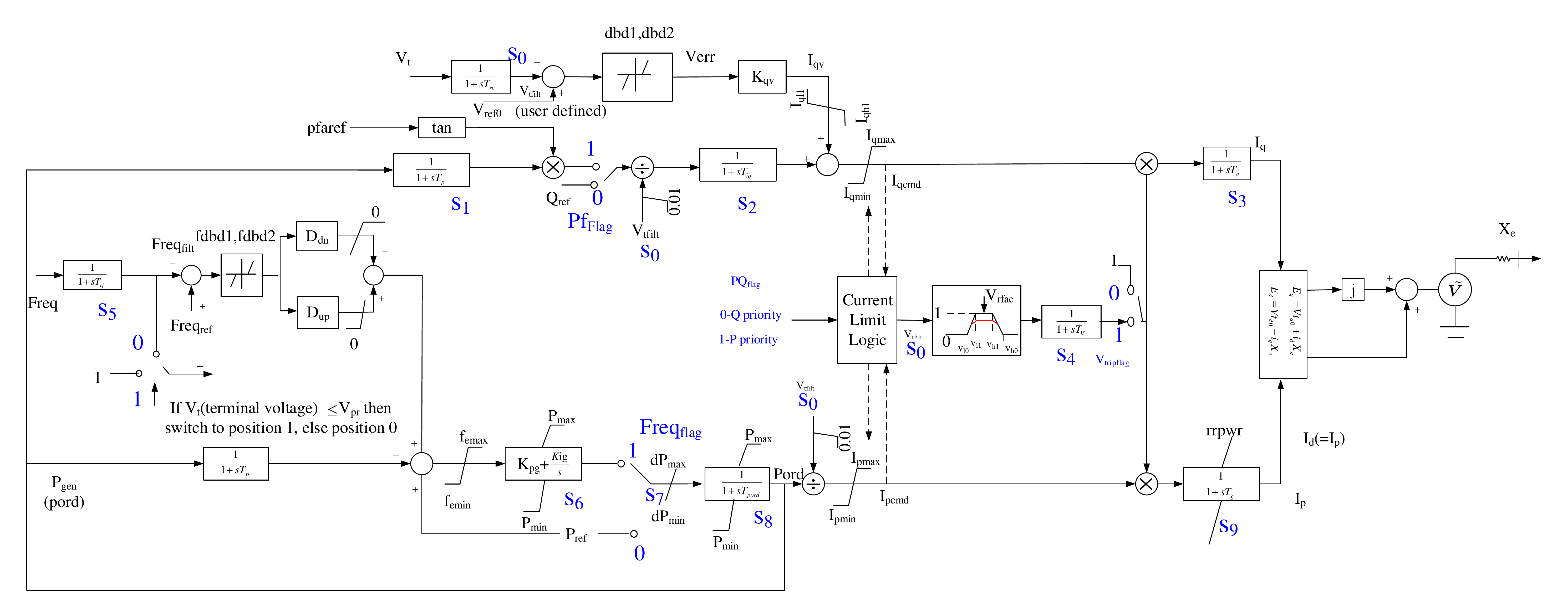}
	\caption{The block diagram of DER\_A in the WECC composite load model developed by [15].}.
	\label{DERA}
\end{figure*}
\begin{align}
	{\dot S_0} &= \frac{1}{{{T_{\rm rv}}}}\left( {{V_{\rm t}} - {S_0}} \right),\\
	{\dot S_1} &= \frac{1}{{{T_{\rm p}}}}\left( {{S_{8}} - {S_1}} \right),\\
	{\dot S_2} &= \!\!\left\{ \begin{aligned}
	-\frac{{S_2}}{{T_{\rm iq}}}\!  + \!\frac{{{Q_{{\mathrm{ref}}}}}}{{T_{\rm iq}}\cdot{\sat_1\left( {{S_0}} \right)}} {\kern 20pt} \text{if}{\kern 1pt} {\kern 1pt} {\kern 1pt} {{\rm Pf}_{{\mathrm{flag}}}} = 0,\\
	-\frac{S_2}{{T_{\rm iq}}} \!+ \!\frac{{\tan \left( {\rm pfaref} \right) \!\times\! {S_1}}}{{T_{\rm iq}}\cdot{\sat_1\left( {{S_0}} \right)}} {\kern 4pt} \text{if}{\kern 1pt} {\kern 1pt} {\kern 1pt} {\kern 1pt} {{\rm Pf}_{{\mathrm{flag}}}} = 1,
	\end{aligned} \right.\\
	{\dot S_3}&=\!\!\left\{\! \begin{aligned}
	\!\! \!\!\frac{\sat_2\!\left\lbrace  {S_2} \!+ \!\sat_3\left[  { \DB_{\rm V}\left( {V_{{\mathrm{ref0}}}} \!-\! {S_0} \right)\! \cdot\! {K_{\rm qv}}} \right]  \right\rbrace \!-\!{S_3}}{T_{\rm g}}\\
	{\kern 100pt}\text{if}{\kern 1pt} {\kern 1pt} {\kern 1pt} {V_{{\mathrm{tripflag}}}} = 0,\\
	\!\!\!\! \frac{ \sat_2\!\left\lbrace  {S_2}\! +\! \sat_3\!\left[ { \DB_{\rm V}\!\left( {{V_{{\mathrm{ref0}}}} \!- \!{S_0}} \right)\! \cdot \!{K_{\rm qv}}} \right]   \right\rbrace \!\! \cdot\!\! {S_4}\!-\!{S_3}}{T_{\rm g}}\\
	{\kern 130pt}\text{if}{\kern 1pt} {\kern 1pt} {\kern 1pt} {\kern 1pt} {V_{{\mathrm{tripflag}}}} = 1,
	\end{aligned} \right.\\\!\!\!\!\!\!\!\!\!\!\!\!
	{\dot S_4} &= \frac{1}{{{T_{\rm v}}}}\left( { \VP({S_0},{V_{{\mathrm{rfrac}}}}) - {S_4}} \right),\label{s4}\\
	{\dot S_5} &= \frac{1}{{{T_{\rm rf}}}}\left( {{\rm Freq} - {S_5}} \right),\label{s5}\\
	{\dot S_6} \!&= {K_{\rm ig}}\sat_4\{ {P_{{\mathrm{ref}}}} \!- \!{S_1}\! +\! \sat_5\left[ {{D_{\mathrm{dn}}} \!\cdot \!\DB_{\rm F}({\rm Freq}_{{\mathrm{ref}}}\! -\! {S_5})} \right]\nonumber \\
	&\;\;\;\;+\! \sat_6\left[ {{D_{\mathrm{up}}} \!\cdot\! \DB_{\rm F}({\rm Freq}_{{\mathrm{ref}}}\! -\! {S_5})} \right] \}\!+\! {\frac{K_{\rm pg}}{T_{\rm p}}}{S_1}\nonumber\\
	&\;\;\;\;+\!G_{\mathrm{dn}}\left( {{\rm Freq}-S_5}\right)+G_{\mathrm{up}}\left({{\rm Freq}-S_5}\right)  -\frac{S_{8}}{T_{\rm p}},\!\!\!\!\!\!\!\!\label{s6}\\
	{\dot S_7} &= \left\{ \begin{aligned}
	{0}{\kern 1pt} {\kern 1pt} {\kern 1pt} {\kern 1pt} {\kern 1pt} {\kern 1pt} {\kern 1pt} {\kern 1pt} {\kern 1pt} {\kern 1pt} {\kern 1pt} {\kern 1pt} {\kern 1pt} {\kern 1pt} {\kern 1pt} {\kern 1pt} {\kern 1pt} {\kern 1pt} {\kern 65pt} \text{if}{\kern 1pt} {\kern 1pt} {\kern 1pt} {\rm Freq}_{{\mathrm{flag}}} = 0,\\
	\sat_8\left[ \dot{\sat}_7\left( {{S_6}} \right)\right] {\kern 1pt} {\kern 1pt} {\kern 1pt} {\kern 1pt} {\kern 1pt} {\kern 1pt} {\kern 1pt} {\kern 1pt} {\kern 1pt} {\kern 1pt} {\kern 1pt} {\kern 1pt} {\kern 1pt} {\kern 1pt} {\kern 1pt} {\kern 1pt} {\kern 1pt} {\kern 1pt} {\kern 1pt} {\kern 1pt} {\kern 1pt} {\kern 1pt} {\kern 1pt} \text{if}{\kern 1pt} {\kern 1pt} {\kern 1pt} {\kern 1pt} {\rm Freq}_{{\mathrm{flag}}} = 1,
	\end{aligned} \right.\\
	{\dot S_8} &=\frac{1}{{{T_{\rm pord}}}}\left( {{S_7} - {S_8}} \right), {\kern 97pt}\\
	{\dot S_9} &=\! \left\{\! \begin{aligned}
	\!\frac{1}{T_{\rm g}}\!\left\lbrace \! {\sat_{9}\!\left[  \frac{{\sat_7({S_8})}}{{\sat_1\left( S_0 \right)}}\right]  \!\!\times\!\! {S_4}\!\! -\!\! {S_9}} \right\rbrace {\kern 2pt} \text{if} {\kern 2pt} {V_{{\mathrm{tripflag}}}}\! =\! 1,\\
	\frac{1}{T_{\rm g}}\!\left\lbrace \! {\sat_{9}\!\left[  \frac{{\sat_7({S_8})}}{{\sat_1\left( {{S_0}} \right)}}\right]  \!-\! {S_9}}\right\rbrace {\kern 12pt} \text{if}  {\kern 2pt} {V_{{\mathrm{tripflag}}}} \!=\! 0,
	\end{aligned} \right.\!\!\!\!
\end{align}
where $\sat_i(x), i = 1,\dots, 9$ are the saturation functions; $ \DB_{\rm V}(x)$ and $ \DB_{\rm F}(x)$ are deadzone functions with respect to voltage and frequency, respectively; and $ \VP(x,V_{{\mathrm{rfrac}}})$ represents the voltage protection function, which is a piece-wise algebraic function. The parameter definitions are given in Table \ref{table1}. Here we only summarize the dynamic equations that will be used in the order reduction. The complete detailed mathematical model can be found in [20].
	\renewcommand\arraystretch{1.2}
	\begin{table}
		\caption{Parameter definition of DER\_A model [15]}
		\label{table1}
		\setlength{\tabcolsep}{3pt}
		\begin{tabular}{|p{35pt}|p{200pt}|}
			\hline
			Parameters& 
			\centerline{Definitions} \\
			\hline
			$T_{\rm rv}$& 
			transducer time constant(s) for voltage measurement\\
			$T_{\rm p}$& 
			transducer time constant (s) \\
			$T_{\rm iq}$& 
			Q control time constant (s)\\
			$V_{\rm ref0}$& 
			voltage reference set-point $>$ 0 (pu)\\
			$K_{\rm qv}$& 
			proportional voltage control gain (pu/pu)\\
			$T_{\rm g}$& 
			current control time constant (s)\\
			${\rm Pf}_{\rm Flag}$& 
			0 $-$ constant Q control, and 1 $-$ constant power factor control\\
			$I_{\mathrm{max}}$& 
			maximum converter current (pu)\\
			$dbd1$& 
			lower voltage deadband $\leqslant 0$ (pu)\\
			$dbd2$& 
			upper voltage deadband $\geqslant 0$ (pu)\\
			$T_{\rm v}$& 
			time constant on the output of voltage/frequency cut-off\\
			$V_{\rm l0}$& 
			voltage break-point for low voltage cut-out of inverters\\
			$V_{\rm l1}$& 
			voltage break-point for low voltage cut-out of inverters\\
			$V_{\rm h0}$& 
			voltage break-point for high voltage cut-out of inverters\\
			$V_{\rm h1}$& 
			voltage break-point for high voltage cut-out of inverters\\
			$t_{\rm vl0}$& 
			timer for $V_{\rm l0}$ point\\
			$t_{\rm vl1}$& 
			timer for $V_{\rm l1}$ point\\
			$t_{\rm vh0}$& 
			timer for $V_{\rm h0} $point\\
			$t_{\rm vh1}$& 
			timer for $ V_{\rm h1}$ point\\
			$V_{{\mathrm{rfrac}}}$& 
			fraction of device that recovers after voltage comes back to within $V_{\rm l1} < V < V_{\rm h1}$\\
			$T_{\rm rf}$& 
			transducer time constant(s) for frequency measurement (must be $\geqslant 0.02 s$) \\
			$K_{\rm pg}$& 
			active power control proportional gain\\
			$K_{\rm ig}$& 
			active power control integral gain\\
			$D_{\mathrm{dn}}$& 
			frequency control droop gain $\geqslant 0$ (down-side)\\
			$D_{\mathrm{up}}$& 
			frequency control droop gain $\geqslant 0$ (up-side)\\
			$f_{\mathrm{emax}}$& 
			frequency control maximum error $\geqslant 0$ (pu)\\
			$f_{\mathrm{emin}}$& 
			frequency control minimum error $\leqslant 0$ (pu)\\
			$f_{\rm dbd1}$& 
			lower frequency control deadband $\leqslant 0$ (pu)\\
			$f_{\rm dbd2}$& 
			upper frequency control deadband $\geqslant 0$ (pu)\\
			${\rm Freq}_{\rm flag}$& 
			0 $-$ frequency control disabled, and 1 $-$ enabled\\
			$P_{\mathrm{min}}$& 
			minimum power (pu)\\
			$P_{\mathrm{max}}$& 
			maximum power (pu)\\
			$T_{\rm pord}$& 
			power order time constant (s)\\
			$dP_{\mathrm{min}}$& 
			power ramp rate down $<$ 0 (pu/s)\\
			$dP_{\mathrm{max}}$& 
			power ramp rate up $>$ 0 (pu/s)\\
			$V_{{\mathrm{tripflag}}}$& 
			0 $-$ voltage tripping disabled, 1 $-$ enabled\\
			$I_{\rm ql1}$& 
			minimum limit of reactive current injection, p.u.\\
			$I_{\rm qh1}$& 
			maximum limit of reactive current injection, p.u.\\
			$X_{\rm e}$& 
			source impedance reactive $>$ 0 (pu)\\
			$F_{{\mathrm{tripflag}}}$& 
			0 $-$ frequency tripping disabled, 1 $-$ enabled\\
			${\rm PQ}_{\rm flag}$& 
			0 $-$ Q priority, 1 $-$ P priority $-$ current limit\\
			${\rm typeflag}$& 
			0 $-$ the unit is a generator $I_{\rm pmin} = 0$, 1 $-$ the unit is a storage device and $I_{\rm pmin} = -I_{\rm pmax}$\\
			$V_{\rm pr}$& 
			voltage below which frequency tripping is disabled\\
			\hline
		\end{tabular}
		\label{tab1}
		\vspace {-0.5 em}
	\end{table}
\section{Reduced Order WECC Composite Load Model}\label{C3}
In this section, we will derive the reduced-order large-signal model of WECC composite load model using singular perturbation method. For the purpose of {order reduction}, we only focus on the dynamic components. These components are connected in parallel and we will reduce each individual component's order.

\subsection{Reduced-Order Three-Phase Motors Model} 

Each three-phase motor model has five states, $\tilde{x}_{\rm M}=\left[E_{\rm q}', E_{\rm d}', E_{\rm q}'', \right.\\\left.E_{\rm d}'', s \right].$ When applying the Algorithm 1, the first step is to identify the slow and fast dynamics. Since the fast dynamics are characterized by the small perturbation coefficient $\varepsilon$, we rewrite the left-hand-side of the dynamic equations as
\begin{eqnarray}\label{dynamics_nop}
	\left[T_{\rm p0}\dot{E}_{\rm q}',\  T_{\rm p0}\dot{E}_{\rm d}', \  T_{\rm pp0}\dot{E}_{\rm q}'',\  T_{\rm pp0}\dot{E}_{\rm d}'',\  H \dot s \right]^T.
\end{eqnarray}

\renewcommand\arraystretch{1.2}
\begin{table}
	\caption{Parameter setting of three-phase motor model}
	\label{table2}
	\setlength{\tabcolsep}{8pt}
	\centering
	%	\begin{tabular}{|p{45pt}|p{50pt}|p{45pt}|p{50pt}|}
	\begin{tabular}{|c|c|c|c|c|c|}	
	    \hline
		\multicolumn{2}{|c|}{Motor A}&\multicolumn{2}{c|}{Motor B}&\multicolumn{2}{c|}{Motor C}\\
		\hline
		$r{\rm sA}$ & 
			0.04&
			$r{\rm sB}$& 
			0.03&
			$r{\rm sC}$& 
			0.03\\
			$L{\rm sA}$& 
			1.8&
			$L{\rm sB}$& 
			1.8&
			$L{\rm sC}$& 
			1.8\\
			$L{\rm pA}$& 
			0.1 &
			$L{\rm pB}$& 
			0.16&
			$L{\rm pC}$& 
			0.16\\
			$L{\rm ppA}$& 
			0.083&
			$L{\rm ppB}$& 
			0.12&
			$L{\rm ppC}$& 
			0.12 \\
			$T_{\rm poA}$& 
			0.092&
			$T_{\rm poB}$& 
			0.1&
			$T_{\rm poC}$& 
			0.1\\
			$T_{\rm ppoA}$& 
			0.002&
			$T_{\rm ppoB}$& 
			0.0026&
			$T_{\rm ppoC}$& 
			0.0026\\
			$H_{\rm A}$& 
			0.05&
			$H_{\rm B}$& 
			1&
			$H_{\rm C}$& 
			0.1\\
			$A_{\rm A}$& 	0&	$A_{\rm B}$& 	0&
			$A_{\rm C}$& 	0\\
			$B_{\rm A}$& 	0&
			$B_{\rm B}$& 	0&$B_{\rm C}$	&0 \\
			$C_{\rm A}$& 	0&	$C_{\rm B}$& 	0&
			$C_{\rm C}$& 	0\\
			$D_{\rm A}$& 	1&	$D_{\rm B}$& 	1&
			$D_{\rm C}$& 	1\\
			$E_{\rm trqA}$& 
			0 &
			$E_{\rm trqB}$& 
			2&
			$E_{\rm trqC}$& 
			2\\
			$p_{\rm A}$& 	-1&	$p_{\rm B}$& 	-1&
			$p_{\rm C}$& 	-1\\
			$q_{\rm A}$& 	-1&	$q_{\rm B}$& 	-1&
			$q_{\rm C}$& 	-1\\
			$\omega_{\rm 0A}$& 
			120$\pi$ &
			$\omega_{\rm 0B}$& 
			120$\pi$&
			$\omega_{\rm 0C}$& 
			120$\pi$\\
			\hline
		\end{tabular}
		\label{tab2}
	\end{table}  
Given one set of parameter setting in Table \ref{table2}, equation (\ref{dynamics_nop}) becomes
\begin{eqnarray}\label{dynamics_wp}
    \left[0.1\dot{E}_{\rm q}', 0.1\dot{E}_{\rm d}',  0.0026\dot{E}_{\rm q}'', 0.0026\dot{E}_{\rm d}'', 0.1 \dot s \right]^T.
\end{eqnarray}
The smaller perturbation coefficients in equation (\ref{dynamics_wp}) suggest that dynamic response velocities of $\left[E_{\rm q}', E_{\rm d}', s \right]^T$ are much slower than the rest of the states.  This difference is also an evidence of the two-time-scale property of this model. Then the slow and fast dynamics are divided as $\dot{\bar{x}}_{\rm M}=\left[ \dot{x}_{\rm M}, \dot{z}_{\rm M}\right]^T$, where $x_{\rm M}=\left[E_{\rm q}', E_{\rm d}', s \right]^T$, $z_{\rm M}=\left[E_{\rm q}'', E_{\rm d}''\right]^T$. For {consistency}, {denote} the input voltages $\left[V_{\rm q},\  V_{\rm d} \right]^T $ as $U_{\rm M}$. Following the singular perturbation method \eqref{standardx}--\eqref{quasi}, we can obtain the reduced-order large-signal model of three-phase motor as       
\begin{align}
   \!\!\!\!\!\! &\dot {x}_{\rm M1}\! = \!\frac{1}{{{T_{\rm p0}}}}\!\left[ { - x_{\rm M1}\! -\! {i_{\rm d}}\!\left( \!{{L_{\rm s}} \!\!-\! \!{L_{\rm p}}} \!\right)\! - \omega_0 \,{T_{\rm P0}\, x_{\rm M2} \, x_{\rm M3}}} \right]\!,\\
   \!\!\!\!\!\! &\dot {x}_{\rm M2}\! = \!\frac{1}{{{T_{\rm p0}}}}\!\left[ { - x_{\rm M2}\! +\! {i_{\rm q}}\!\left(\! {{L_{\rm s}} \!\!-\! \!{L_{\rm p}}}\! \right)\! + \omega_0 \,{T_{\rm P0} \, x_{\rm M1}  \,x_{\rm M3}}} \right]\!,\\
    \!\!\!\!\!\!&\dot {x}_{\rm M3}=\frac{T_{\rm L}-p\cdot h_2\left( x_{\rm M}\right) \cdot i_{\rm d}-q\cdot h_1\left( x_{\rm M}\right)\cdot i_{\rm q}}{2H}\!,
\end{align}
where the QSS solutions are 
\begin{align}
    h_1\!\left( x_{\rm M}\right)\!=&\frac{1}{r_{\rm s}^{2}\!+\!L_{\rm p}^{2}}\Big[\left(L_{\rm p} L_{\rm pp}\!+\!r_{\rm s}^{2}\right)\!x_{\rm M1}\!-\!\left(L_{\rm p}\! -\!L_{\rm pp}\right)\!r_{\rm s} x_{\rm M2} \nonumber\\ 
    \;& -\left(L_{\rm p} -L_{\rm pp}\right)L_{\rm p} U_1-\left(L_{\rm p} -L_{\rm pp}\right)r_{\rm s} U_2\Big],\\
    h_2\!\left( x_{\rm M}\right)\!=&\frac{1}{r_{\rm s}^{2}\!+\!L_{\rm p}^{2}}\Big[ \left(L_{\rm p}\! \;-\!L_{\rm pp}\right)\!r_{\rm s}x_{\rm M1}\!-\!\left(L_{\rm p} L_{\rm pp}\!+\!r_{\rm s}^{2}\right) x_{\rm M2} \nonumber\\
    & +\left(L_{\rm p}\! -\!L_{\rm pp}\right)\!r_{\rm s} U_1-\left(L_{\rm p} -L_{\rm pp}\right)L_{\rm p} U_2\Big].
\end{align}
Denote $\bar{x}_{\rm M3}=1-x_{\rm M3}$. {The} other algebraic equations are 
\begin{align}
    T_{\rm L} &= T_{\rm m0} \Big[ A\;\bar{x}_{\rm M3}^2 + B\; \bar{x}_{\rm M3} + {C_0} + D\;{ \bar{x}_{\rm M3}^{\rm Etrq}} \Big], \\ 
    T_{\rm m0} &= p\cdot h_2 \left( x_{\rm M}\right)\cdot i_{\rm d} + q\cdot h_1 \left( x_{\rm M}\right)\cdot i_{\rm q},\\
    i_{\rm q} &= \frac{r_{\rm s}}{r_{\rm s}^{2} \!+\! L_{\rm p}^{2}}\left(U_{\rm M1}\!+\!x_{\rm M1} \right) \!-\!\frac{L_{\rm p}}{r_{\rm s}^{2} \!+\! L_{\rm p}^{2}}\left(U_{\rm M2}\!+\!x_{\rm M2}\right),\\
    i_{\rm d} &= \frac{L_{\rm p}}{r_{\rm s}^{2}\! + \!L_{\rm p}^{2}}\left(U_{\rm M1}\!+\!x_{\rm M1}\right) \!+\!\frac{r_{\rm s}}{r_{\rm s}^{2} \!+\! L_{\rm p}^{2}}\left(U_{\rm M2}\!+\!x_{\rm M2}\right).
\end{align}

By solving equation (\ref{epsilon**}), we can find a pair of solution $(T,\varepsilon^{**})=(0.012,0.035)$. Since $\varepsilon_{\rm M}=0.0026<0.035$, the solutions of fast dynamics $\hat{z}_{\rm M}$ converge to $h(\hat{x}_{\rm M})$ exponentially fast within time $0.012\; sec$ which is short enough. Therefore, we can use only the QSS solution $h(\hat{x}_{\rm M})$ to represent the solution of fast dynamics.
%\begin{remark}
    %After obtaining the quasi-steady-state model, we should also derive the boundary-layer model and check whether {Assumptions \ref{thm:as1}--\ref{thm:as3}} are satisfied. If satisfied, then we can proceed to approximate slow and fast states in terms of the solutions of the reduced model and boundary-layer model. If not, we should make some modification like coordinate transformation or redefining the states, then reselect the slow and fast states. In this case, it can be verified that the assumptions are satisfied under such parameter setting. Moreover, the small perturbation coefficient satisfies the condition of equation (\ref{morefast}). Therefore, the fast states can be approximated by the {quasi-steady-state mapping $h_i(x_{\rm M})$, i = 1, 2}. This approximation significantly simplifies the reduced model.
%\end{remark}
\subsection{Reduced-Order DER\_A model}              
The DER\_A model has 10 states in total, {$\tilde{x}_{\rm D} = \left[S_0, S_1, \dots, S_9\right]^T$}. Different from the three-phase motor model, due to the existence of switches such as ${\rm Pf}_{\mathrm{flag}}$ and ${\rm PQ}_{\mathrm{flag}}$, the DER\_A model is actually a switching system consisting of {$2^6 = 64$} subsystems. Each subsystem is determined when the switches are fixed. Since these switches are preset, we only need to derive the reduced-order model for each subsystem. For brevity, we give the reduced-order model for one of the subsystems to illustrate the {model order reduction} procedure. The reduced-order models for other subsystems can be obtained using the same method. 
    
To find $\varepsilon$, we rewrite the dynamics as
\begin{multline}\label{dynamics_der}
     \left[T_{\rm rv}\dot{S}_0,\; T_{\rm p}\dot{S}_1,\; T_{\rm iq}\dot{S}_2,\; T_{\rm g}\dot{S}_3,\; T_{\rm v}\dot{S}_4,\; T_{\rm rf}\dot{S}_5,\right.  \\\left. T_{\rm p}\cdot T_{\rm rf}\dot{S}_6,\; \dot{S}_7,\; T_{\rm pord}\dot{S}_8,\; T_{\rm g}\dot{S}_9 \right]^T.
\end{multline}
Given the parameter setting in Table \ref{table3}, equation (\ref{dynamics_der}) becomes  
\begin{multline} \label{dynamics_der1}
    \left[0.1\dot{S}_0,\; 0.1\dot{S}_1,\; 0.005\dot{S}_2,\; 0.005\dot{S}_3,\; 0.005\dot{S}_4,\right.  \\ 
    0.1\dot{S}_5,\;\left. 0.1\cdot 0.1\dot{S}_6,\; \dot{S}_7,\; 0.005\dot{S}_8,\; 0.005\dot{S}_9 \right]^T.
\end{multline}
The smaller perturbation coefficients in equation (\ref{dynamics_der1}) suggest that dynamic response velocities of $\left[S_0, S_1, S_5, S_7\right]^T$ are much slower than other states. This difference is also an evidence of the {two-time-scale} property of this model. Then the slow and fast dynamics are divided as $\dot{\bar{x}}_{\rm D}=\left[\dot{x}_{\rm D}\dot{z}_{\rm D}\right]^T$, where $x_{\rm D}=\left[S_0,S_1,S_5,S_7 \right]^T$, $z_{\rm D}=\left[S_2,S_3,S_4,S_6,S_8,S_9\right]^T$. Defining the terminal voltage and frequency $\left[V_{\rm t},\  {\rm Freq} \right] $ as $U_{\rm D}$. {Following} the same procedure as above {(equations \eqref{standardx}--\eqref{quasi})}, we can derive the reduced-order large-signal model of DER\_A as       
    \renewcommand\arraystretch{1.1}
    \begin{table}
    \caption{Parameter setting of DER\_A model (Base: 12.47 kV and 15.0 MVA)}
    \label{table3}
    \setlength{\tabcolsep}{10pt}
    \centering
    %	\begin{tabular}{|p{45pt}|p{50pt}|p{45pt}|p{50pt}|}
    \begin{tabular}{|c|c|c|c|}	
    	\hline
    	Parameters                  & Values        & Parameters        & Values \\
    	\hline
    	$T_{\rm rv}$                & 0.1 s         & $T_{\rm p}$       & 0.1 s \\
    	$T_{\rm iq}$                & 0.005 s       & $V_{\rm ref0}$    & 0 pu \\
    	$K_{\rm qv}$                & 5 pu/pu       & $T_{\rm g}$       & 0.005 s \\
    	${\rm Pf}_\mathrm{flag}$    & 1             & $I_\mathrm{max}$  & 1.2 pu \\
    	${\rm dbd1}$                & -0.05 pu      & ${\rm dbd2}$      & 0.05 pu \\
    	$T_{\rm v}$                 & 0.005 s       & $V_{\rm l0}$      & 0.44 pu \\
    	$V_{\rm l1}$                & 0.49 pu       & $V_{\rm h0}$      & 1.2 pu \\
    	$V_{\rm h1}$                & 1.15 pu       & $t_{\rm vl0}$     & 0.16 s \\
    	$t_{\rm vl1}$               & 0.16 s        & $t_{\rm vh0}$     & 0.16 s \\
    	$t_{\rm vh1}$               & 0.16 s        & $V_\mathrm{rfrac}$& 0.7 \\
        $T_{\rm rf}$                & 0.1 s         & $K_{\rm pg}$      & 0.1 pu \\
    	$K_{\rm ig}$                & 10 pu         & $D_\mathrm{dn}$   & 20 pu \\
    	$D_{\mathrm{up}}$           & 0 pu          & $f_\mathrm{emax}$ & 99 pu \\
    	$f_{\mathrm{emin}}$         & -99 pu        & $f_{\rm dbd1}$    & -0.0006 \\
    	$f_{\rm dbd2}$              & 0.0006        & ${\rm Freq}_\mathrm{flag}$    
    	& 0 \\
    	$P_{\mathrm{min}}$          & 0 pu          & $P_{\max}$        & 1.1 pu \\
    	$T_{\rm pord}$              & 0.005 s       & $dP_{\min}$       & -0.5 pu/s \\
        $dP_{\max}$                 & 0.5 pu/s      &$V_\mathrm{tripflag}$  
        & 1 \\
    	$I_{\rm ql1}$               & -1 pu         & $I_{\rm qh1}$     & 1 pu \\
    	$X_{\rm e}$                 & 0.25 pu       & $F_\mathrm{tripflag}$ 
    	& 1 \\
    	${\rm PQ}_\mathrm{flag}$    & 0             & ${\rm typeflag}$  & 1 \\
    	$V_{\rm pr}$                & 0.8 pu        & $a$               & 0.8 pu \\
    	$b$                         & 5             & $c$               & 1 s \\
    	\cmidrule{3-4}
    	$d$                         & 0.9 pu        & \multicolumn{2}{|c|}{} \\
    	\hline
    \end{tabular}
    \label{tab3}
    \vspace {-0.5 em}
    \end{table}   
    \begin{align}
    &{\dot x_{\rm D1}} = \frac{1}{{{T_{\rm rv}}}}\left( {{U_{\rm D1}} - {x_{\rm D1}}} \right),\\
	&{\dot x_{\rm D2}} = \frac{1}{{{T_{\rm p}}}}\left( {{x_{\rm D4}} - x_{\rm D2}} \right),\\
	&{\dot x_{\rm D3}} = \frac{1}{{{T_{\rm rf}}}}\left( {U_{\rm D2} - x_{\rm D3}} \right),\\
	&{\dot x_{\rm D4}} = 0 .  
    \end{align}
	To obtain the output power, we also need to calculate the output currents, which are identified as fast states. According to Algorithm 1, there are two options to represent the solutions of fast dynamics depending on the magnitude of $\varepsilon$. For simplicity, it it better to use only the QSS solution to represent the fast states since it does not require solving the boundary-layer model. Let $\varepsilon^{**}=0.06$ to make sure $\max\{\varepsilon_{\rm D}\}<\varepsilon^{**}$, then solving equation (\ref{epsilon**}), we obtain $T=0.242 \;sec$. This means if we use only the QSS solutions, the solution of fast dynamics is inaccurate within $0.242 \;sec$. This time period is intolerably long for stability analysis. Therefore, we should use $z=h+\hat{y}$ by solving the boundary-layer model. The $d$-$q$ axis currents $i_{\rm d}$ and $i_{\rm q}$ are states $S_3$ ($z_{\rm D2}$) and $S_9$ ($z_{D6}$), respectively. Their equations are
	\begin{align}
	&\!\!\!\!\!i_{\rm q}= \sat_2 \left\lbrace \gamma(x_{\rm D}) \right\rbrace  \times  \VP\left(x_{\rm D1},V_{{\mathrm{rfrac}}}\right)+\hat{y}_{\rm D2},\\ 
	&\!\!\!\!\!i_{\rm d}= \sat_{9}\!\left[  \frac{{\sat_7({x_{\rm D4}})}}{{\sat_1\left( x_{\rm D1} \right)}}\right] \times  \VP\left(x_{\rm D1},V_{{\mathrm{rfrac}}} \right)+\hat{y}_{D6}  ,\\
	&\!\!\!\!\!\gamma(x_{\rm D})\!=\!\frac{\tan\!\! \left( {\rm pfaref} \right)\!x_{\rm D2}}{\sat_1\left( x_{\rm D1}\!\right) }\!\!+\!\!K_{\rm qv} \sat_3\left[  \DB_{\rm V}\left( V_{\rm ref0}-x_{\rm D1}\right) \right],
	\end{align}	
	where $\hat{y}_{\rm D2}$ and $\hat{y}_{D6}$ are the solutions of boundary-layer model: 
	\begin{align}
	\dot{y}_{\rm D1}=&-y_{\rm D1},\\ 
	\dot{y}_{\rm D2}=y_{\rm D3}-y_{\rm D2}-& \VP\left(x_{\rm D1},V_{{\mathrm{rfrac}}}\right)\times \nonumber\\
	\left\lbrace \sat_{2}\left[  \gamma(x_{\rm D}) \right]	   + \sat_{2}\right.&\left.\left[  y_{\rm D1} +\gamma(x_{\rm D}) \right] \right\rbrace,\\ 
	\dot{y}_{\rm D3}=&-y_{\rm D3},\\ 
	\dot{y}_{\rm D4}=&-T_{\rm rf}y_{D5},\\ 
	\dot{y}_{D5}=&-y_{D5},\\ 
	\dot{y}_{D6}\!=\!-\sat_{9}\left[\frac{{\sat_7({x_{\rm D4}})}}{{\sat_1\left( x_{\rm D1} \right)}}\right]&-y_{D6} \times  \VP\left(x_{\rm D1},V_{{\mathrm{rfrac}}} \right)    \nonumber\\
	 \!+ \!\sat_{9}\!\left[\!  \frac{{\sat_7({y_{D5}\!+\!x_{\rm D4}})}}{{\sat_1\left( x_{\rm D1} \right)}}\!\right] \!\!\times\! &\left[ y_{\rm D3}\!+\! \VP\left(x_{\rm D1},V_{{\mathrm{rfrac}}} \right)\right].
	\end{align}
\begin{figure}[t!]
	\centering
	\includegraphics[width=8cm]{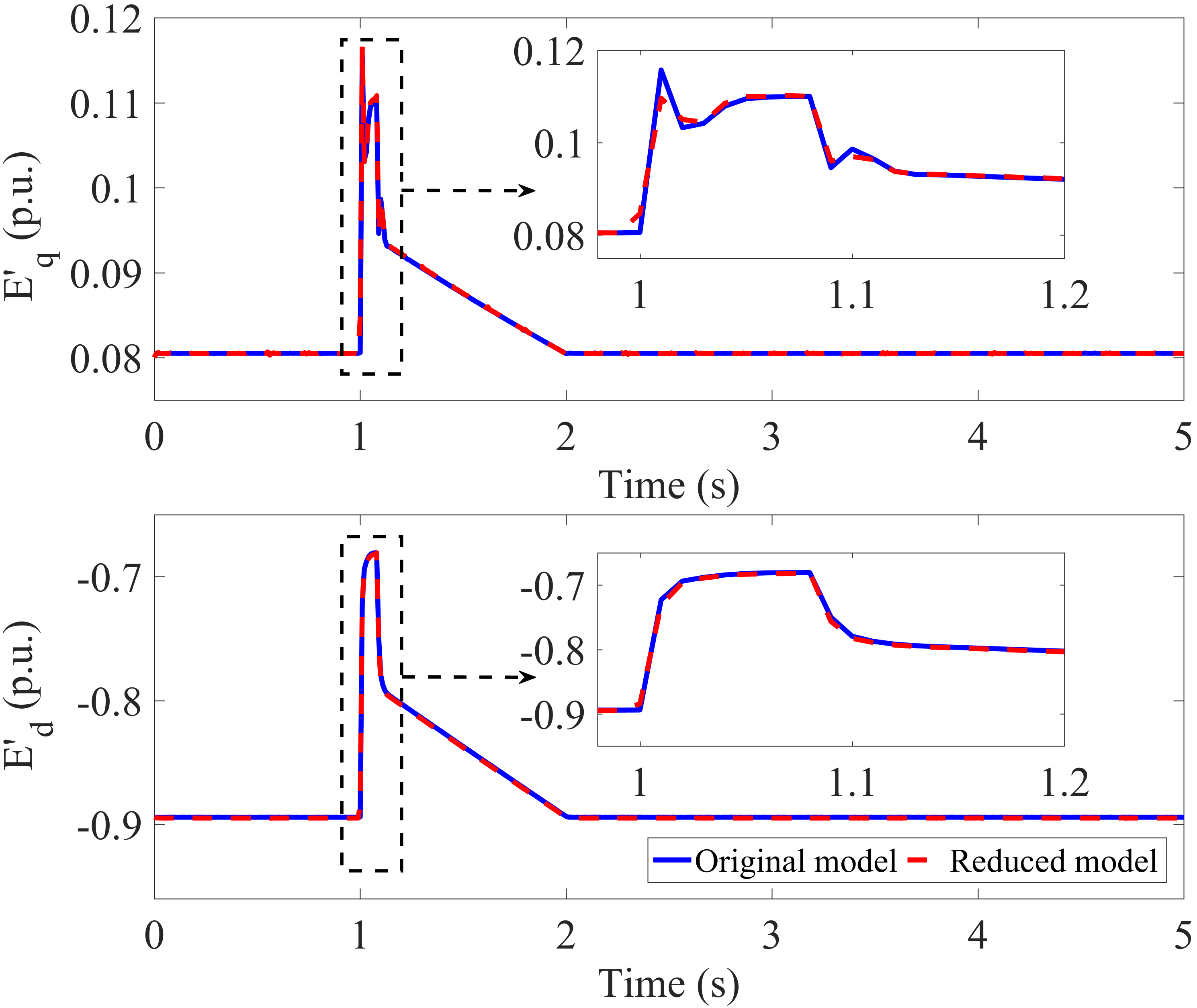}
	\caption{Dynamic responses of $E_{\rm d}'$ and $E_{\rm q}'$ of reduced/original models of three-phase motor A.}
	\label{reduce_Ed_motora}
\end{figure}
\begin{figure}[t!]
	\centering
	\includegraphics[width=8cm]{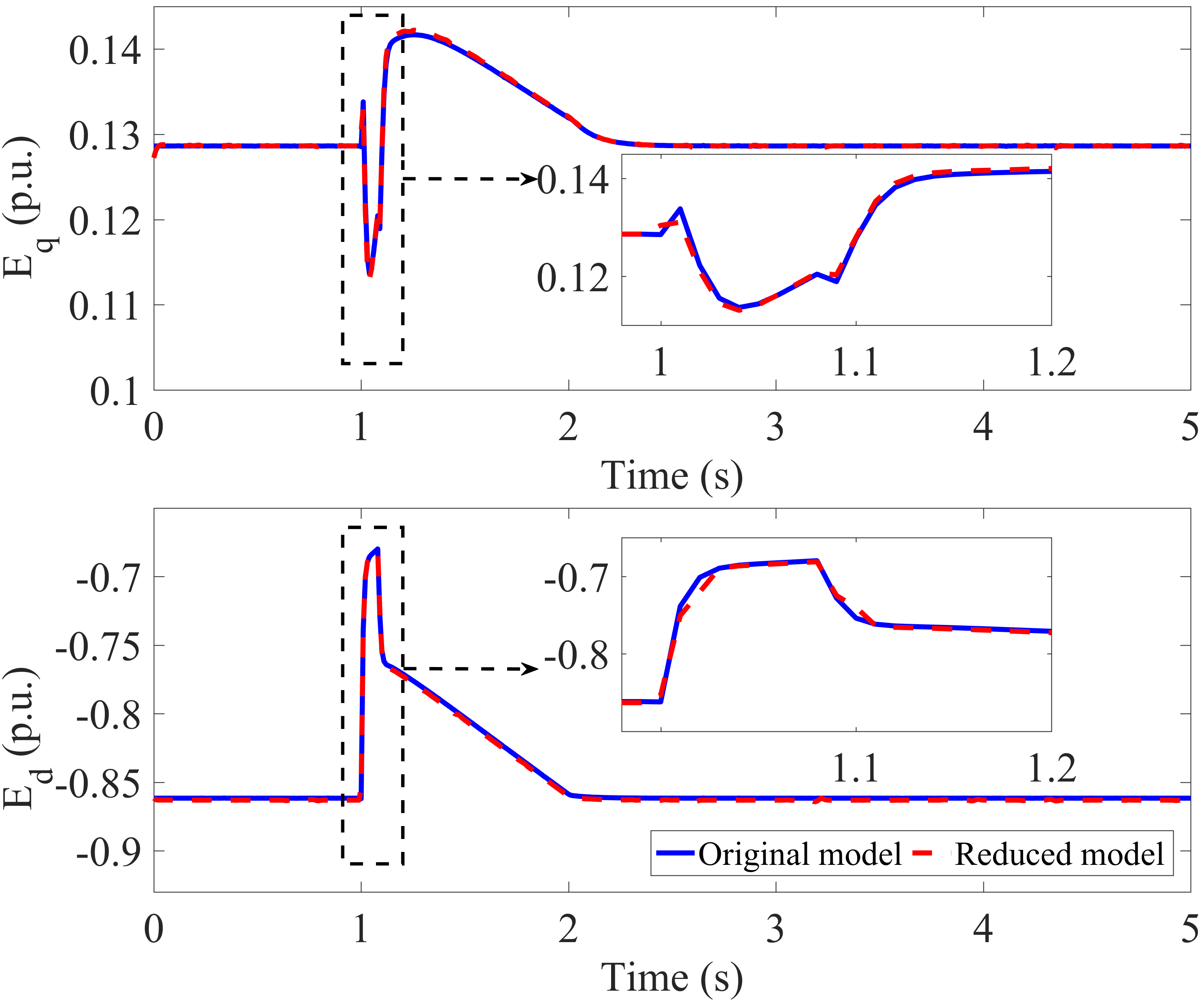}
	\caption{Dynamic responses of $E_{\rm d}'$ and $E_{\rm q}'$ of reduced/original models of three-phase motor B.}
	\label{reduce_Ed_motorb}
\end{figure}
\begin{figure}[t!]
	\centering
	\includegraphics[width=8cm]{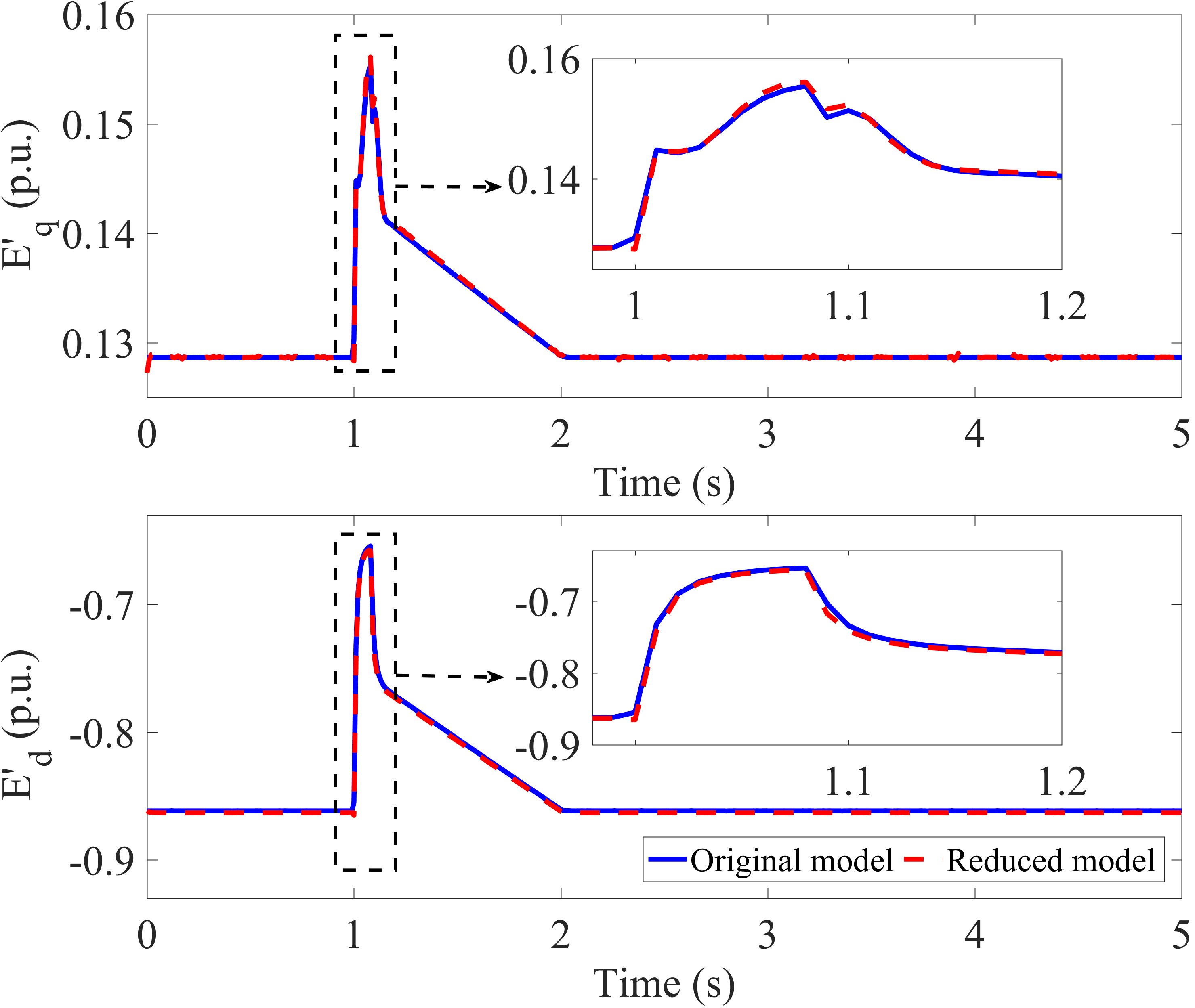}
	\caption{Dynamic responses of $E_{\rm d}'$ and $E_{\rm q}'$ of reduced/original models of three-phase motor C.}
	\label{reduce_Ed_motorc}
\end{figure}

\begin{figure}[t!]
	\centering
	\includegraphics[width=8cm]{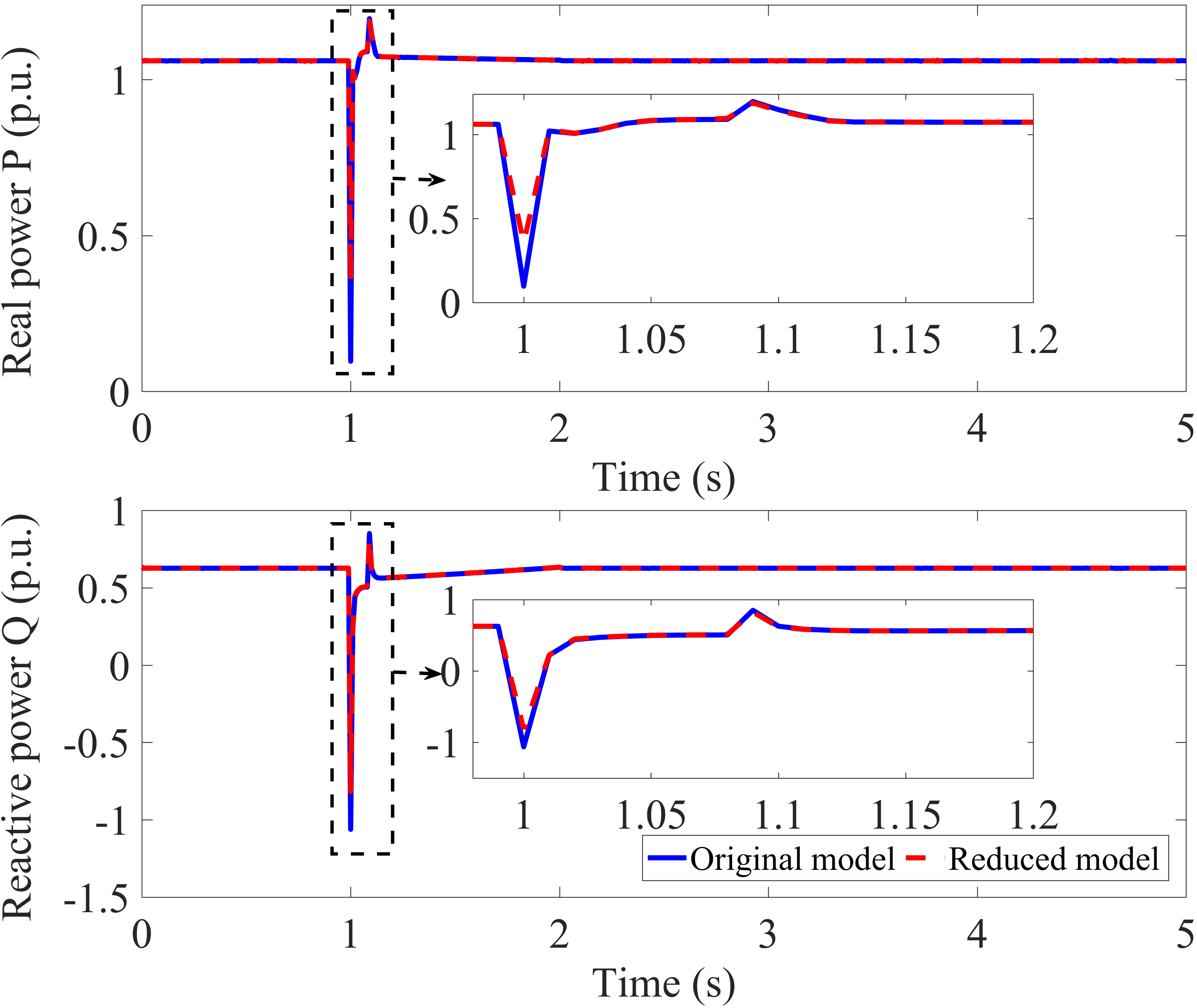}
	\caption{Real/reactive powers of reduced/original models of three-phase motor A.}
	\label{reduce_motora}
\end{figure}
\begin{figure}[t!]
	\centering
	\includegraphics[width=8cm]{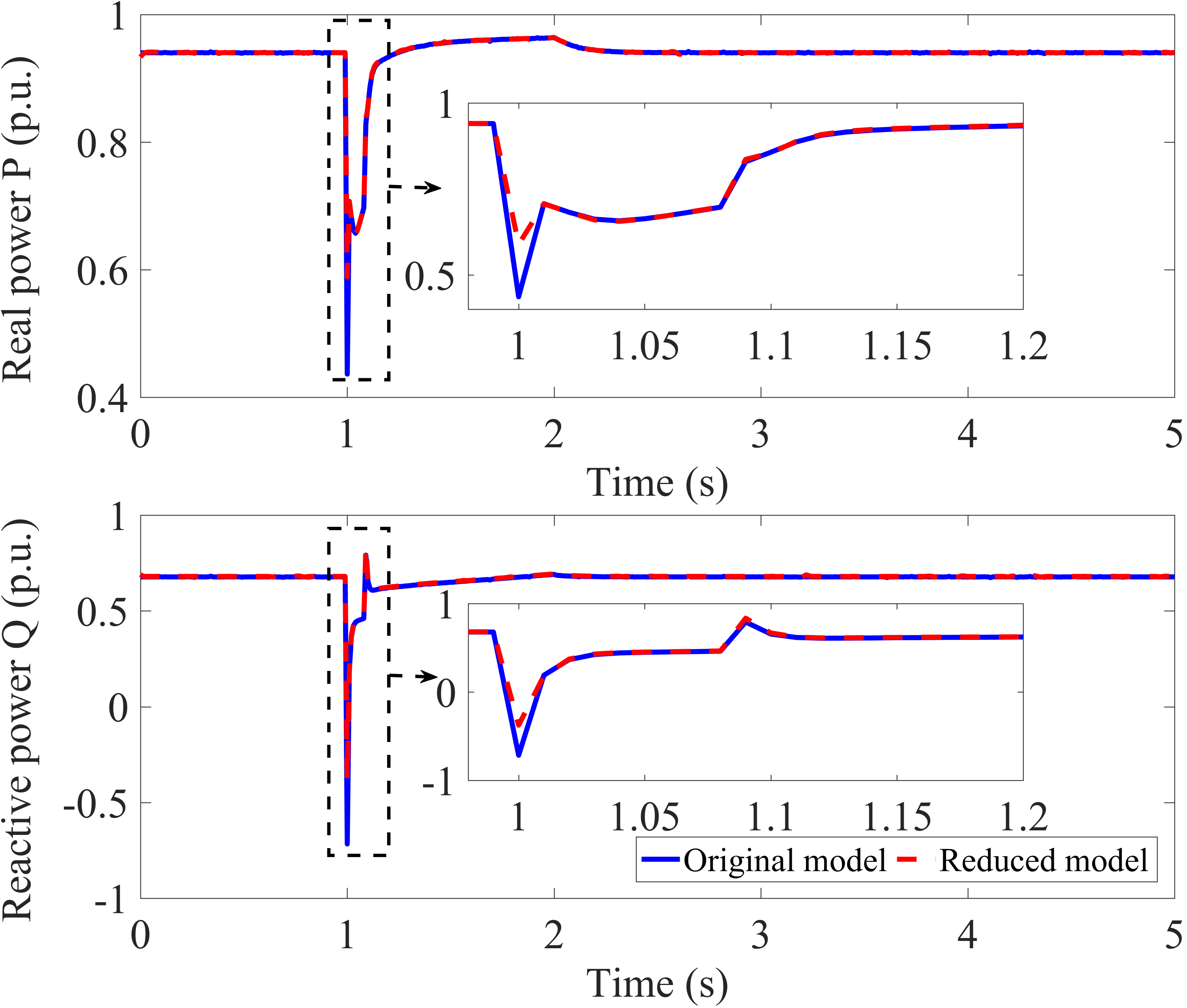}
	\caption{Real/reactive powers of reduced/original models of three-phase motor B.}
	\label{reduce_motorb}
\end{figure}
\begin{figure}[t!]
	\centering
	\includegraphics[width=8cm]{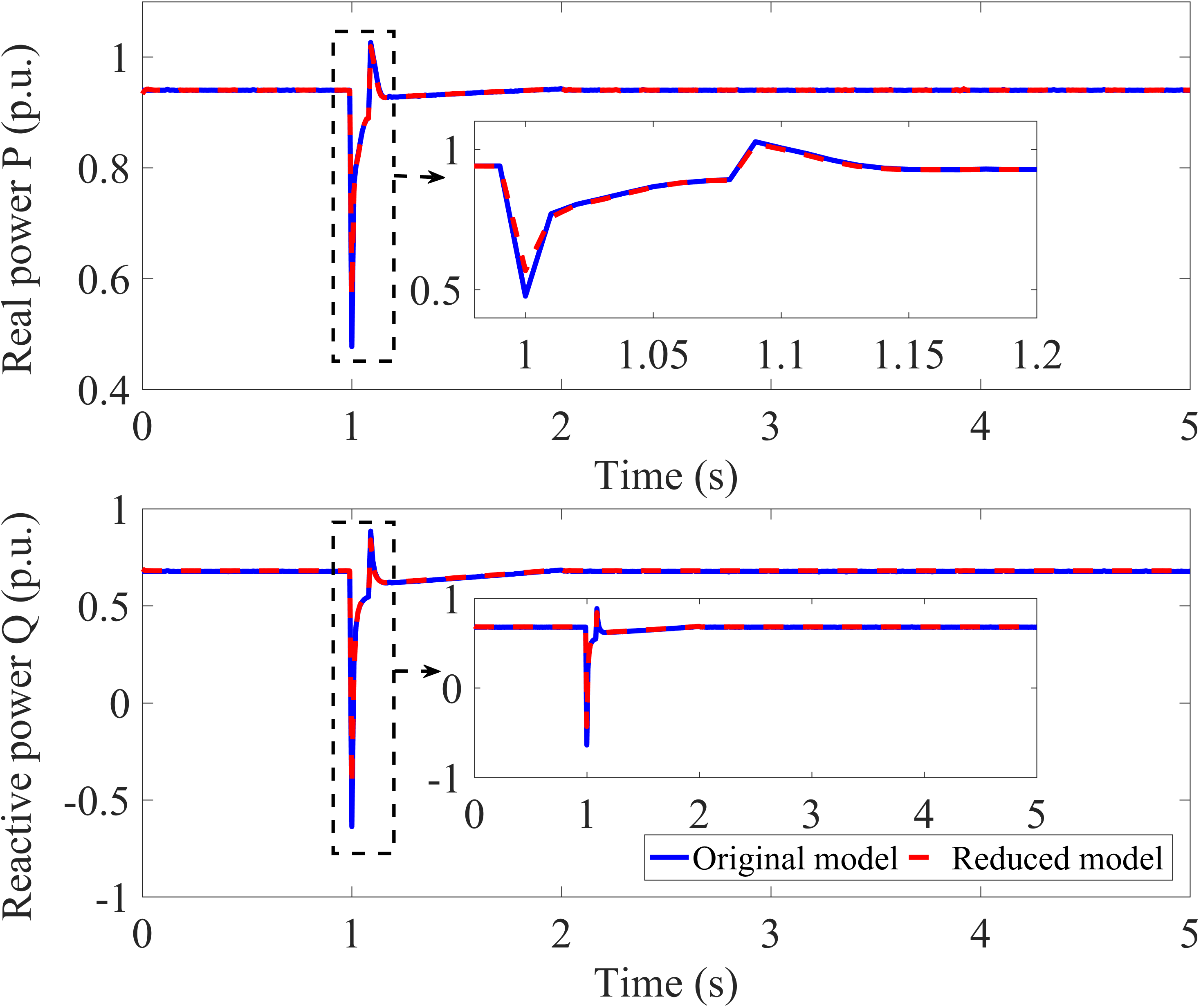}
	\caption{Real/reactive powers of reduced/original models of three-phase motor C.}
	\label{reduce_motorc}
\end{figure}

\section{Model Validation via Simulation}\label{C4}
In this section, the reduced-order models of three-phase motors and DER\_A are tested in Matlab using different solvers. We compare the performance of reduced-order model with original model to verify the effectiveness of the proposed high-fidelity order reduction approach. Moreover, we compare the {computational} time between two models using different solvers to show the reduction of computational burden.  

\subsection{Validation of Reduced-Order Three-Phase Motors}
To verify the proposed reduced-order model of three-phase motor, we {simulate} the reduced and original model in Matlab with the same input voltage. Consequently, we can compare their output power and other states. Refer to an EPRI white paper [21] and [15], this paper tests a voltage sag benchmarking bus voltage input that is generated by (\ref{vgen}). The parameters are set as Table \ref{table2} referring to [22].
\begin{eqnarray}\label{vgen}
V\left( t\right) \! =\! \left\{\! \begin{aligned}
&a{\kern 40pt} \text{if} {\kern 5pt} 1\leqslant t<\left( 1+b/60\right) \!\!\!\!\!\\\!\!\!\!\!\!
&\!\frac{(1\!-\!d)(c\!+\!1\!-t)}{b/60-c}\!+\!1\!{\kern 5pt} \text{if}  {\kern 1pt} \left( 1\!+\!b/60\right) \leqslant \!t\!<\! 1\!+\!c\!\!\!\!\!\\\!\!\!\!\!
&1{\kern 40pt} \text{otherwise}
\end{aligned} \right.
\end{eqnarray}

 Fig. \ref{reduce_Ed_motora}-\ref{reduce_Ed_motorc} show the state responses {of} $E_{\rm q}'$ and $E_{\rm d}'$ {for} three-phase motor A, B and C, respectively. {The blue solid lines denote $E_{\rm q}'$ and $E_{\rm d}'$ of the original model, while the red dashed lines represent those of the reduced-order model. Fig. \ref{reduce_motora}-\ref{reduce_motorc} shows the output real and reactive powers. The blue solid lines denote the real and reactive power of the original model, while the red dashed lines represent those of the reduced-order model.} The mean squared errors of real and reactive power between the original and {reduced-order} model are shown in Table. \ref{table4}. The small errors show the accuracy of the proposed reduced-order three-phase model. Moreover, if using ODE45, which is a solver for non-stiff ODE problems, the {computational} time of the original and {reduced-order} model are 8.8120 {sec} and 0.1926 {sec}, respectively. If using ODE15s, which is a stiff ODE solver, the computational time of the original and reduced model are 1.0975 {sec} and 0.1785 {sec}, respectively. This comparison shows that the singular perturbation method {converts} the original high-order stiff problem to a reduced-order non-stiff problem while {considerably} reducing the {computational} time. This reduction will be more significant in large-scale system with multiple composite loads.
 
\renewcommand\arraystretch{2}
	\begin{table}
		\caption{The mean squared errors between original and reduced-order model of three-phase motor.}
		\label{table4}
		\setlength{\tabcolsep}{2pt}
		%	\begin{tabular}{|p{45pt}|p{50pt}|p{45pt}|p{50pt}|p{50pt}|}
		\begin{tabular}{|c|c|c|c|}	
			\hline
			\multirow{2}*{\diagbox{Power}{Motor}}&\multicolumn{3}{c|}{Mean Squared Error (MSE)}\\
			\cmidrule{2-4}
			&Motor A&Motor B&Motor C\\
			\hline
			Real power&$1.0509\times10^{-4}$ &$1.1295\times10^{-4}$  &$8.0264\times10^{-5}$ \\
			\hline
			Reactive power&$1.1422\times10^{-5}$ & $1.4294\times10^{-5}$  &		$2.1112\times10^{-5}$ \\
			\hline
		\end{tabular}
		\label{tab4}
		\vspace {-1 em}
	\end{table}

\subsection{Validation of DER\_A Model}
Similar to the verification of three-phase motor, we {simulate} the original {and reduced-order} model of DER\_A in Matlab. The voltage input is the same as (\ref{vgen}). The frequency input is set to be 60 Hz. The parameter setting follows the reliability guideline in [23] as Table \ref{table3}.
\begin{figure}[t!]
	\centering
	\includegraphics[width=7.8cm]{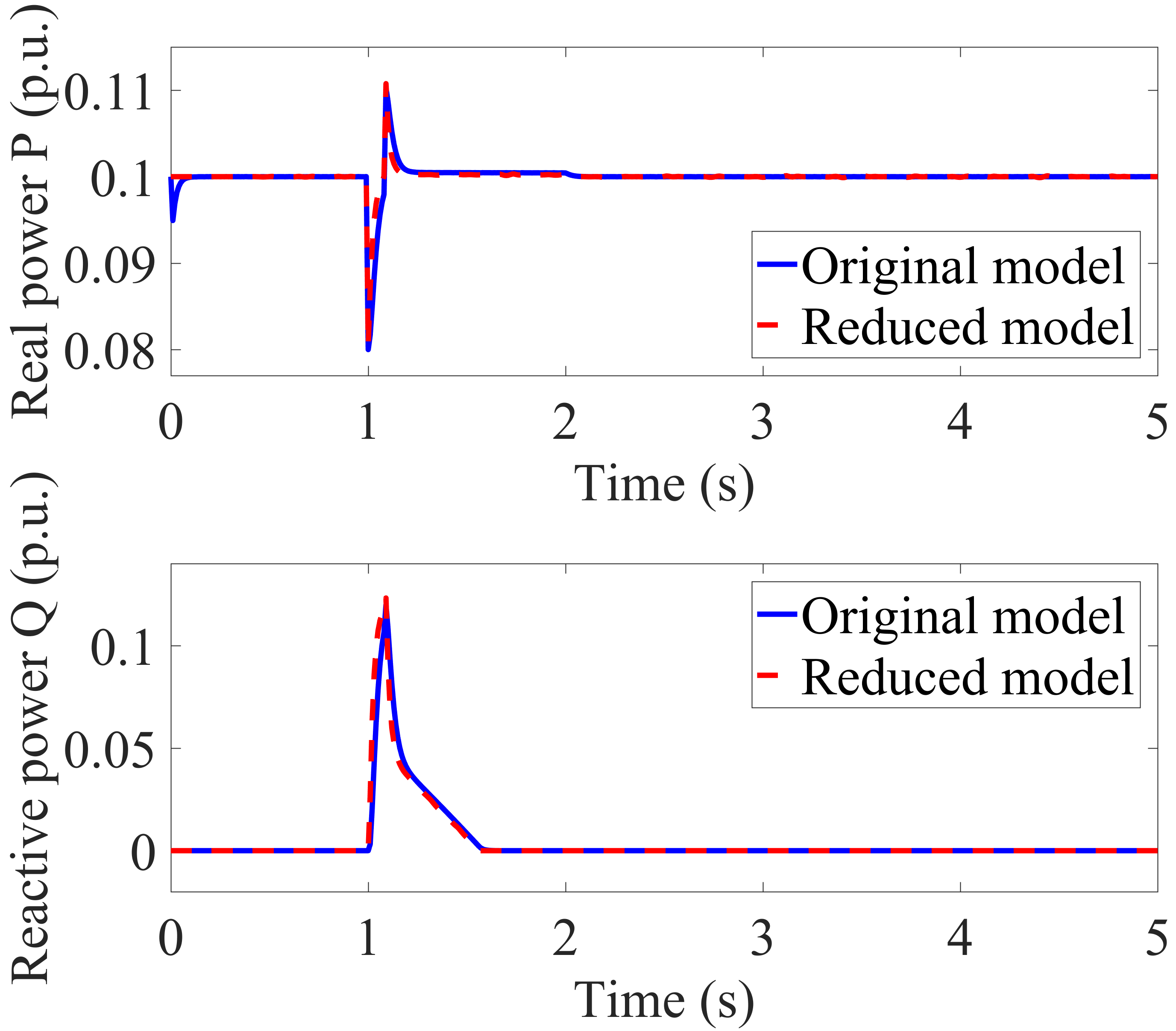}
	\caption{Real and reactive power of reduced and original model of DER\_A.}
	\label{POWER_REDUCE_DERA2}
	\vspace {-1 em}
\end{figure}
\begin{figure}[t!]
	\centering
	\includegraphics[width=7.8cm]{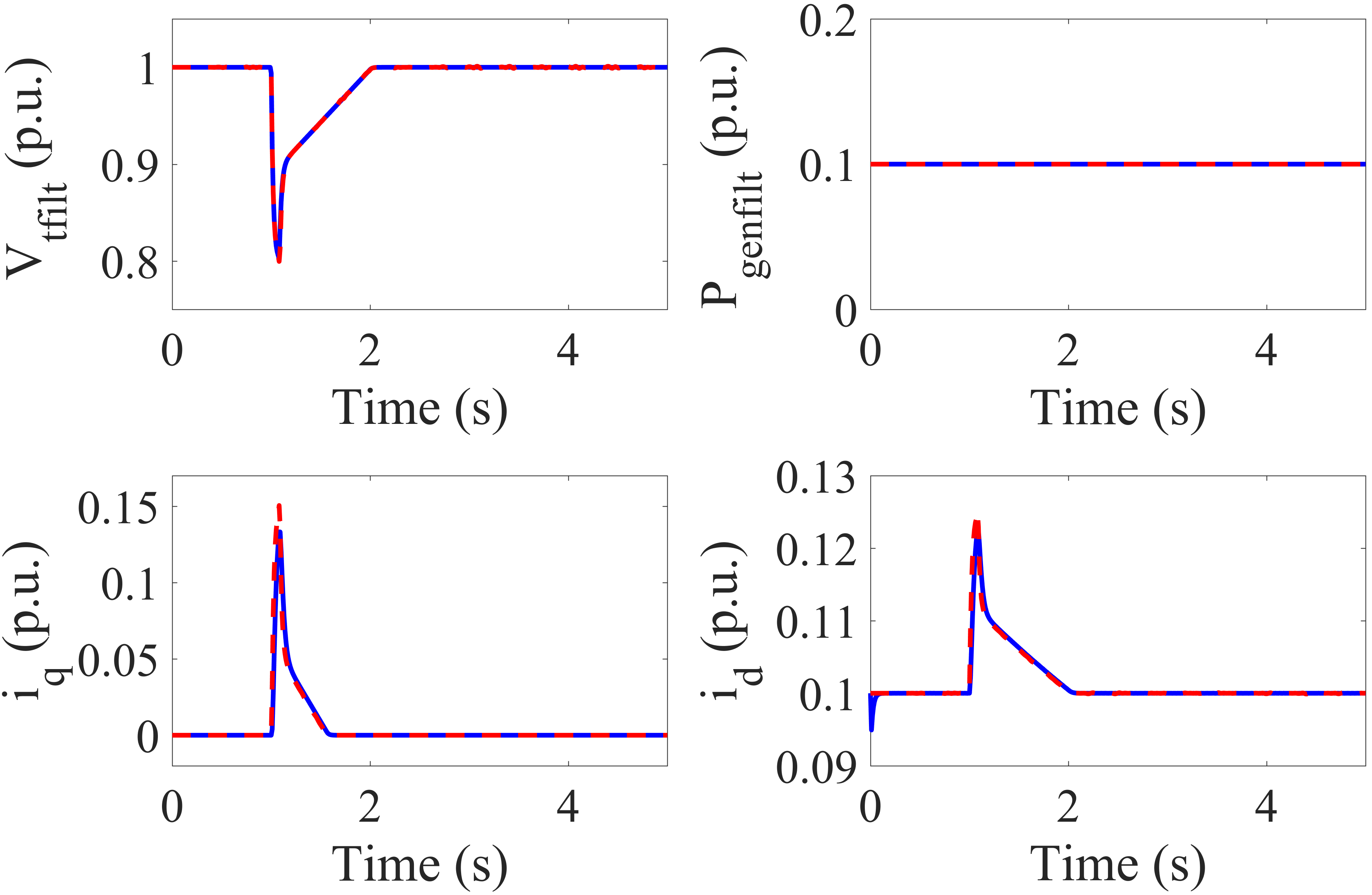}
	\caption{Filtered voltage $V_{\mathrm{tfilt}} $, filtered generated power $P_{\mathrm{genfilt}} $, and filtered current $i_{\rm q}$ and $i_{\rm d}$ of reduced and original model of DER\_A.}
	\label{reduce_dera2}
	\vspace {-1 em}
\end{figure}

Fig. \ref{POWER_REDUCE_DERA2} shows the dynamic responses of DER\_A. The blue lines denote the output powers of original model, while the red lines represent those of reduced one. Fig. \ref{reduce_dera2} shows filtered voltage $V_{\mathrm{tfilt}} $, filtered generated power $P_{\mathrm{genfilt}} $, and filtered current $i_{\rm q}$ and $i_{\rm d}$ of reduced and original model of DER\_A. The mean square errors (MSE) of real and reactive power are $7.1363\times10^{-4}$ and $1.3045\times10^{-5}$, respectively. Further, the {computational} time of original and {reduced-order} model using ODE45 are 11.205 {sec} and 0.2074 {sec}, respectively; the {computational} time of original and {reduced-order} model using ODE15s are 2.0012 {sec} and 0.1598 {sec}, respectively. 
	
\section{Conclusion}\label{C5}
This paper proposes a high-fidelity large-signal order reduction approach for the latest WECC composite load model including DER\_A. The derived reduced-order model has guaranteed high accuracy that can replace the original load model in high-order system simulation to perform stability analysis, optimization and controller design. This replacement can significantly reduce the difficulty of stability analysis and computational burden. The simulation results verify the accuracy and efficiency of the proposed algorithm.

\section{References}\label{sec13}
\begin{enumerate}
\item[{[1]}] C. W. Taylor: 'Power system voltage stability' (New York: McGraw-Hill, 1994.)\vspace*{6pt}

\item[{[2]}] C.  Wang,  Z.  Wang,  J.  Wang,  and  D.  Zhao: 'SVM-based  parameter identification for composite ZIP and electronic load modeling', IEEE Trans. Power Syst., 2019, 34, (1), pp. 182–193\vspace*{6pt}

\item[{[3]}] J.  Zhao,  Z.  Wang,  and  J.  Wang: 'Robust  time-varying  load  modeling for  conservation  voltage  reduction  assessment', IEEE Trans. Smart Grid, 2018, 9, (4), pp. 3304–3312\vspace*{6pt}

\item[{[4]}] D. Kosterev, A. Meklin, J. Undrill, B. Lesieutre, W. Price, D. Chassin,R. Bravo, and S. Yang: 'Load modeling in power system studies: WECC progress  update',  Proc.  IEEE  Power  and  Energy  Society  General Meeting - Conversion  and  Delivery  of  Electrical  Energy  in  the  21st Century, 2008, pp. 1–8\vspace*{6pt}

\item[{[5]}] C. Wang, Z. Wang, J. Wang, and D. Zhao: 'Robust time-varying parameter identification for composite load modeling', IEEE Trans. Smart Grid, 2019, 10, (1), pp. 967–979\vspace*{6pt}

\item[{[6]}] J. Wang, W. Zuo, L. Rhode-Barbarigos, X. Lu, J. Wang, and Y. Lin: 'Literature review on modeling and simulation of energy infrastructures from a resilience perspective', Reliability Engineering \& System Safety, 2018\vspace*{6pt}

\item[{[7]}] K. Zhang, H. Zhu, and S. Guo: 'Dependency  analysis  and  improved parameter  estimation  for  dynamic  composite  load  modeling', IEEE Trans. Power Syst., 2016, 32, (4), pp. 3287–3297\vspace*{6pt}

\item[{[8]}] S.  Guo,  K.  S.  Shetye,  T.  J.  Overbye,  and  H.  Zhu: 'Impact  of  measurement selection on load model parameter estimation,” 2017 IEEE Power and Energy Conference at Illinois (PECI).  IEEE, 2017, pp. 1–6.\vspace*{6pt}

\item[{[9]}] M. Cui, J. Wang, Y. Wang, R. Diao, and D. Shi: 'Robust time-varying synthesis  load  modeling  in  distribution  networks  considering  voltage disturbances', IEEE Trans. Power Syst., pp. 1–1, 2019.\vspace*{6pt}

\item[{[10]}] C. Fu, Z. Yu, D. Shi, H. Li, C. Wang, Z. Wang, and J. Li: 'Bayesian estimation based parameter estimation for composite load', arXiv preprint 2019, arXiv: 1903.10695.\vspace*{6pt}

\item[{[11]}] D. N. Kosterev, C. W. Taylor, and W. A. Mittelstadt: 'Model validation for the August 10, 1996 WSCC system outage', IEEE Trans. Power Syst., 1999, 14, (3), pp. 967–979.\vspace*{6pt}

\item[{[12]}] A. Arif, Z. Wang, J. Wang, B. Mather, H. Bashualdo, and D. Zhao: 'Load Modeling—A Review', IEEE Trans. Smart Grid, 2018, 9, (6), pp. 5986–5999\vspace*{6pt}

\item[{[13]}] W. E. C. Council: 'WECC dynamic composite load model (CMPLDW) specifications', 2015.\vspace*{6pt}

\item[{[14]}] Q. Huang, R. Huang, B. J. Palmer, Y. Liu, S. Jin, R. Diao, Y. Chen, and Y.  Zhang: 'A  generic  modeling  and  development  approach  for  WECC composite load model', Electr. Power Syst. Res., 2019,  172, pp.1–10.\vspace*{6pt}

\item[{[15]}] Electrical  Power  Research  Institute  (EPRI): 'The  new  aggregated distributed energy resources (DER\_A) model for transmission planning studies: 2019 update', 2019.\vspace*{6pt}

\item[{[16]}] H. K. Khalil: 'Nonlinear Systems'. New Jersey: Prentice Hall, 2000. \vspace*{6pt}

\item[{[17]}] S.  D.  Pekarek,  M.  T.  Lemanski,  and  E.  A.  Walters,  “On  the  use  of singular  perturbations  to  neglect  the  dynamic  saliency  of  synchronous machines', IEEE Trans. Energy Convers, 2002, 17, (3), pp.385–391.\vspace*{6pt}

\item[{[18]}] M.  Rasheduzzaman,  J.  A.  Mueller,  and  J.  W.  Kimball: 'Reduced-order  small-signal  model  of  microgrid  systems', IEEE Trans. Sustain. Energy, 2015, 6, (4), pp. 1292–1305. \vspace*{6pt}

\item[{[19]}] R.  M.  G.  Castro  and  J.  M.  Ferreira  de  Jesus: 'A  wind  park  reduced-order model using singular perturbations theory', IEEE Trans. Energy Convers, 1996, 11, (4), pp. 735–741. \vspace*{6pt}

\item[{[20]}] Z. Ma, Z. Wang, Y. Wang, R. Diao, and D. Shi: 'Mathematical representation  of  the  WECC  composite  load  model', arXiv  preprint, 2019, arXiv:1902.08866.\vspace*{6pt}

\item[{[21]}] Electric Power Research Institute (EPRI): 'Distributed energy resources modeling for transmission planning studies. Summary modeling guidelines', 2016.\vspace*{6pt}

\item[{[22]}] X. Wang, Y. Wang, D. Shi, J. Wang and Z. Wang: 'Two-stage WECC composite load modeling: a double deep q-Learning networks approach', IEEE Trans. Smart Grid, 2020, in press.\vspace*{6pt}

\item[{[23]}] North American Electric Reliability Corporation (NERC): 'Reliability guideline-parameterization of the DER\_A model', June, 2019.\vspace*{6pt}
\end{enumerate}

%\vfill\pagebreak

\end{document}